\RequirePackage[british]{babel}
\documentclass[reqno,a4paper,12pt,final]{amsart}
\usepackage{fixltx2e}
\usepackage[utf8x]{inputenc}
\usepackage[T1]{fontenc}
\usepackage{tgpagella}
\usepackage[small]{eulervm}
\usepackage{amsmath,amssymb,amstext,amsthm,amscd,mathrsfs,eucal}
\usepackage[hmargin=2.7cm]{geometry}
\usepackage{graphicx,color}
\usepackage{array,colortbl,xtab}
\usepackage[all]{xy}
\usepackage[noadjust]{cite}
\usepackage{hyperref}
\hypersetup{%
  pdftitle   = {Half-BPS M2-brane orbifolds},
  pdfkeywords = {AdS/CFT, M2, M-theory, Killing spinors, 7-sphere quotients, orbifolds, Goursat},
  pdfauthor  = {Paul de Medeiros, José Figueroa-O'Farrill},
  pdfcreator = {\LaTeX\ with package \flqq hyperref\frqq}
}
\PrerenderUnicode{éÉ}
%
%
%

\newcommand{\one}{\boldsymbol{1}}
\newcommand{\half}{\tfrac12}

\newcommand{\Cl}{\mathrm{C}\ell}

\newcommand{\sA}{\mathbb{A}}
\newcommand{\sD}{\mathbb{D}}
\newcommand{\sE}{\mathbb{E}}

\newcommand{\fosp}{\mathfrak{osp}}

\newcommand{\PSO}{\mathrm{PSO}}
\newcommand{\SO}{\mathrm{SO}}
\newcommand{\Ort}{\mathrm{O}}
\newcommand{\Spin}{\mathrm{Spin}}
\newcommand{\Sp}{\mathrm{Sp}}
\newcommand{\SU}{\mathrm{SU}}

\newcommand{\RR}{\mathbb{R}}
\newcommand{\HH}{\mathbb{H}}

\newcommand{\CC}{\mathbb{C}}
\newcommand{\ZZ}{\mathbb{Z}}

\newcommand{\eN}{\mathscr{N}}

\newcommand{\bv}{\boldsymbol{v}}

\newcommand{\abar}{\overline{a}}
\newcommand{\bbar}{\overline{b}}

\newcommand{\Dbar}{\overline{D}}

\newcommand{\Fbar}{\overline{F}}
\newcommand{\Gbar}{\overline{\Gamma}}
\newcommand{\Abar}{\overline{A}}
\newcommand{\Bbar}{\overline{B}}
\newcommand{\alphabar}{\overline{\alpha}}
\newcommand{\betabar}{\overline{\beta}}
\newcommand{\lambdabar}{\overline{\lambda}}
\newcommand{\rhobar}{\overline{\rho}}
\newcommand{\that}{{\widehat\tau}}
\DeclareMathOperator{\AdS}{AdS}
\DeclareMathOperator{\Aut}{Aut}

\DeclareMathOperator{\Out}{Out}
\DeclareMathOperator{\Twist}{Twist}

\DeclareMathOperator{\Hom}{Hom}

\DeclareMathOperator{\id}{id}

%
%
\theoremstyle{plain}
\newtheorem{lemma}{Lemma}

\theoremstyle{definition}

\newtheorem*{problem}{Problem}
%
%
\newcommand{\MUNCH}[1]{\relax}

\allowdisplaybreaks
\setlength{\extrarowheight}{3pt}
%
%
%
\begin{document}
\title[Half-BPS M2-brane orbifolds]{Half-BPS M2-brane orbifolds}
\author[de Medeiros]{Paul de Medeiros}
\author[Figueroa-O'Farrill]{José Figueroa-O'Farrill}
\address{School of Mathematics and Maxwell Institute for Mathematical Sciences, University of Edinburgh, Scotland, UK}
\email{\{P.deMedeiros,J.M.Figueroa\}@ed.ac.uk}
\date{\today}
\begin{abstract}
  Smooth Freund--Rubin backgrounds of eleven-dimensional supergravity of the form $\AdS_4 \times X^7$ and preserving at least half of the supersymmetry have been recently classified.  Requiring that amount of supersymmetry forces $X$ to be a spherical space form, whence isometric to the quotient of the round 7-sphere by a freely-acting finite subgroup of $\SO(8)$.  The classification is given in terms of ADE subgroups of the quaternions embedded in $\SO(8)$ as the graph of an automorphism.  In this paper we extend this classification by dropping the requirement that the background be smooth, so that $X$ is now allowed to be an orbifold of the round 7-sphere.  We find that if the background preserves more than half of the supersymmetry, then it is automatically smooth in accordance with the homogeneity conjecture, but that there are many half-BPS orbifolds, most of them new.  The classification is now given in terms of pairs of ADE subgroups of quaternions fibred over the same finite group.  We classify such subgroups and then describe the resulting orbifolds in terms of iterated quotients.  In most cases the resulting orbifold can be described as a sequence of cyclic quotients.
\end{abstract}
\maketitle
\tableofcontents
\listoftables

\section{Introduction}
\label{sec:introduction}

Recent advances in our detailed understanding of the $\AdS_4/\text{CFT}_3$ correspondence for (at least) half-BPS M2-brane configurations have been made possible by the explicit construction of superconformal field theories in three dimensions that are invariant under an orthosymplectic Lie superalgebra $\fosp( \eN |4)$ for $4 \leq \eN \leq 8$ \cite{BL2, BL3, MaldacenaBL, SchnablTachikawa, 3Lee, BHRSS, pre3Lee, GaiottoWitten, Imamura:2008dt}. Such theories preserve at least half the maximal amount of superconformal symmetry in three dimensions and the unitary theories have now been classified (see \cite{SCCS3Algs} and references therein for a comprehensive review). In several cases, the moduli spaces of gauge-inequivalent superconformal vacua for these theories have been analysed \cite{LambertTong, MaldacenaBL, ABJ, MasahitoBL, KlebanovBL, Terashima:2008ba, Imamura:2008nn, Imamura:2008ji, Imamura:2009ur, Imamura:2009ph} and found to contain a branch which, in the strong coupling limit, is identified with the expected dual geometry for the eight-dimensional space transverse to the worldvolume of a single M2-brane configuration preserving the same amount of supersymmetry. In each case, this geometry is found to describe an orbifold of the form $\RR^8 /\Gamma$, where $\Gamma$ is some finite subgroup of $SO(4) < SO(8)$ which commutes with the R-symmetry in the $\eN \geq 4$ conformal superalgebra. The orbifold $\RR^8 /\Gamma$ is precisely the cone over the quotient $S^7/\Gamma$ which appears in the dual Freund-Rubin solution $\AdS_4 \times S^7/\Gamma$ of eleven-dimensional supergravity in the near-horizon limit. Although the quotient $\RR^8 /\Gamma$ is necessarily singular (since the origin is always fixed by $\Gamma$), the quotient $S^7/\Gamma$ need not be. In fact, for all the known moduli spaces associated with M2-brane $\eN \geq 4$ superconformal field theories, the quotient $S^7/\Gamma$ is always smooth for $\eN >4$ whereas it is never smooth for $\eN =4$.

Arguably the most pressing open problem in this subject is to understand the precise nature of the dictionary between the superconformal field theories and their dual geometries.  As with natural languages, dictionaries contain two lists of words and a correspondence between them.  The two lists of words in the case of the $\AdS_4/\text{CFT}_3$ correspondence are the superconformal field theories on the one hand, and the geometries on the other.  As mentioned above the unitary superconformal Chern--Simons theories with $\eN\geq 4$ matter have been classified, and the purpose of this paper is to classify the possible dual geometries.

The mathematical problem we address in this paper is thus the classification of quotients $S^7/\Gamma$ which are spin and for which the vector space of real Killing spinors has dimension $\eN\geq 4$.  This boils down to the classification of certain subgroups of $\Spin(8)$ (up to conjugation).  In \cite{deMedeiros:2009pp} we discussed the case of smooth quotients and in this paper we would like to extend these results to include also singular quotients; that is, orbifolds.  We will see that those orbifolds for which $\eN>4$ are actually smooth, whence the list in \cite{deMedeiros:2009pp} is already complete.  The novelty in this paper is the determination of the $\eN=4$ orbifolds.

This paper is organised as follows.  In Section~\ref{sec:spherical-orbifolds} we discuss the geometry of spherical orbifolds in the context of supergravity backgrounds.   In Section~\ref{sec:spin-orbifolds} we review the fact that the orthonormal frame bundle of an orbifold is a smooth manifold, whence one can talk about spin structures just as for smooth manifolds, and we relate the Killing spinors on a spherical orbifold to the $\Gamma$-invariant parallel spinors on its cone orbifold.  In Section~\ref{sec:statement-problem} we rephrase the classification of $\eN\geq 4$ supersymmetric Freund--Rubin backgrounds of the form $\AdS_4 \times S^7/\Gamma$ as the classification of certain finite subgroups of $\Spin(8)$ up to conjugation.  For $\eN\geq 4$ supersymmetry, they are subgroups of $\Spin(4) \cong \Sp(1) \times \Sp(1)$.  The classification of finite subgroups of $\Spin(4)$ will take the first half of this paper.

The classification is an application of Goursat's theory of subgroups of a direct product of groups, which we review in Section \ref{sec:goursats-lemma} after introducing in Section~\ref{sec:finite-subgr-quat} the main ingredients of the classification: namely, the ADE subgroups of the quaternions.  In contemporary language, the main consequence of Goursat's theory is that finite subgroups of $\Sp(1) \times \Sp(1)$ are given as products of ADE subgroups $A$ and $B$ fibred over an abstract finite group $F$.  This means that $F$ is a common factor group of $A$ and $B$ by respective normal subgroups.  As explained in Section~\ref{sec:case-interest}, this requires classifying the normal subgroups of the ADE groups, their possible factor groups and their groups of outer automorphisms.  The determination of the normal subgroups and their corresponding factor groups is presented in Section~\ref{sec:quotients-ade}.  This is both well-known and well-hidden (or at least widely scattered) in the mathematical literature and we have had to recover these results ourselves.  We think it is probably useful to collect them here under one roof.

Section~\ref{sec:subgroups-spin4} puts everything together: we list the compatible pairs of ADE subgroups with isomorphic factor groups, work out the outer automorphisms, and assemble the fibred products.  The finite subgroups of $\Spin(4)$ are listed in three tables: Table~\ref{tab:products} contains the product groups (those fibred over the trivial group), Table~\ref{tab:smooth} contains the smooth quotients (those which are graphs of automorphisms), and Table~\ref{tab:remaining} lists the remaining groups.  The product groups require no effort and the smooth quotients were classified in \cite{deMedeiros:2009pp}, hence the main new result in this paper consists of Table~\ref{tab:remaining} and the subsequent analysis.  In Appendix~\ref{sec:struct-fibr-prod} we show that in almost all cases the isomorphism type of the subgroup only depends on the choice of ADE subgroups and their common factor group $F$ and \emph{not} on the outer automorphism of $F$ used to twist the product.  This result is perhaps not directly relevant to the subsequent analysis, but we found it useful to keep in mind, hence it is relegated to an appendix.  In Appendix~\ref{sec:finite-subgroups-so4} we recover the classification in \cite{MR1957212} of finite subgroups of $\SO(4)$ as an independent check on the results of the paper.

In Section~\ref{sec:expl-desc-orbif}, we make the classification more concrete by discussing the actual orbifolds.  We discuss how an orbifold by a group $\Gamma$ can be decomposed into orbifolds by smaller groups, starting from a normal subgroup of $\Gamma$ or, more generally, a subnormal series associated to $\Gamma$.  The theory behind this process is explained in Section~\ref{sec:orbi-iter-quots}.  We observe that with very few exceptions, orbifolds by finite subgroups of $\Spin(4)$ can be decomposed into a small number of iterated cyclic quotients.  The exceptions are the orbifolds which involve the binary icosahedral group, which we discuss briefly in Section~\ref{sec:orbif-involv-e8}.  The rest, which are the solvable groups, are treated in detail in Section~\ref{sec:solv-orbi-iter}.  We treat first the orbifolds by product groups, then the smooth quotients and finally the orbifolds by solvable groups in Table~\ref{tab:remaining}.  A few of the $\eN=4$ orbifolds have already appeared in the literature and in Section~\ref{sec:conclusion} we identify them and show where in our classification they occur.  The paper ends with a brief summary of the similar classification for the case of M5-branes.

\subsection*{How to use this paper}

The authors are the first to concede that the paper is somewhat technical, but we also believe that the results are potentially useful and hence we would like to offer the busy reader a brief user guide to the main results contained in the paper.

First a word about notation.  We distinguish between $\subset$, $<$ and $\lhd$.  Suppose $G$ is a group.  Then $S\subset G$ simply means that $S$ is a subset of $G$, whereas $S<G$ means that $S$ is a subgroup and $S\lhd G$ means it is a normal subgroup.  Normal subgroups play an important rôle in this paper, so let us say a few words about them.  Normal subgroups are the kernels of homomorphisms.  If $N\lhd G$ is a normal subgroup then $G/N$ becomes a group in such a way that the map $\pi: G \to G/N$ sending an element to its $N$-coset (left and right cosets agree for normal subgroups) is a group homomorphism with kernel $N$.  The relation between $G$, $N$ and $G/N$ can be succinctly summarised in terms of an exact sequence
\begin{equation}
  \begin{CD}
    1 @>>> N @>>> G @>\pi>> G/N @>>> 1~.
 \end{CD}
\end{equation}
It is important to stress that $G/N$ is an abstract group and not a subgroup of $G$.  If $G$ does have a subgroup $H$ to which $\pi$ restricts to give an isomorphism $\pi : H \stackrel{\cong}{\to} G/N$, then we say that the sequence splits, which implies that $G$ is the semidirect product $H \ltimes N$.  Of course, semidirect products include the direct products as a special case.  Exact sequences like the one above are also called group extensions and $G$ is said to be an extension of $G/N$ by $N$.  If the sequence splits, the extension is said to be trivial.

As explained in the body of the paper, $\eN\geq 4$ supersymmetry means that the relevant orbifolds are $S^7/\Gamma$, where $\Gamma$ is a subgroup of $\Sp(1) \times \Sp(1)$.  We work quaternionically, because we believe this makes the results much easier to describe, and the formulae much more natural.  This means that for us $\Sp(1)$ is indeed the group of unit quaternions and although it is isomorphic to $\SU(2)$ we eschew this isomorphism; although for the sake of comparison with the existing literature we provide a brief dictionary in Section~\ref{sec:conclusion}.  The action of $\Sp(1) \times \Sp(1)$ on $S^7$ is defined by letting $S^7$ be the unit sphere in $\HH^2$ and having $\Sp(1) \times \Sp(1)$ act on $\HH^2 = \HH \oplus \HH$ by left multiplication: namely, $(u,v) \in \Sp(1)\times \Sp(1)$ sends $(x,y)\in \HH^2$ to $(ux,vy)$.  This defines an embedding of $\Sp(1) \times \Sp(1)$ into $\SO(4) \times \SO(4)$ and hence into $\SO(8)$.

Two Freund--Rubin backgrounds $\AdS_4 \times S^7/\Gamma$ and $\AdS_4 \times S^7/\Gamma'$ are equivalent if and only if $\Gamma$ and $\Gamma'$ are conjugate in $\Spin(8)$.  This freedom allows us to make a number of choices along the way.  The upshot of the analysis is that each subgroup $\Gamma$ is equivalent to one which is uniquely characterised by a 4-tuple $(A,B,F,\tau)$, which we now explain.

First of all, $A,B$ are ADE subgroups of $\Sp(1)$.  These are very concrete finite subgroups of $\Sp(1)$ described in Table \ref{tab:ADE} along with their explicit quaternion generators.  We use the Dynkin labels $\sA_{n-1}$, $\sD_{n+2}$, $\sE_{6,7,8}$ (for $n\geq 2$) to refer to those precise subgroups and \emph{not} to their isomorphism classes: $\ZZ_n$, $2D_{2n}$, $2T$, $2O$ and $2I$, respectively.

The group $F$ is an \emph{abstract} finite group (i.e., not a subgroup of anything), admitting surjections from both $A$ and $B$.  Since the kernel of the surjections $A\to F$ and $B\to F$ are normal subgroups of $A$ and $B$, respectively, it is enough to classify the possible normal subgroups of the ADE subgroups of $\Sp(1)$.  They are summarised, along with the corresponding $F$s, in Table~\ref{tab:normal}.

Finally, $\tau \in \Out(F)$ is a representative of a certain equivalence class of outer automorphisms of $F$.  This requires determining the group $\Aut(F)$ of automorphisms of $F$, the group of inner automorphisms and hence the group $\Out(F)$ of outer automorphisms.  The results are summarised in Table~\ref{tab:outer}.  In some cases, different outer automorphisms give rise to equivalent orbifolds.  The equivalence relation agrees with that defined by the action on $\Out(F)$ by a certain group.  The set of orbits we call the set of possible twists and denote it $\Twist(F)$; although it does not just depend on $F$ but also on how $F$ is obtained as a factor of $A$ and $B$.

Out of the data $(A,B,F,\tau)$ one defines a subgroup $A\times_{(F,\tau)} B$ of $A\times B$ called a (twisted) fibred product.  There are many such subgroups and are given in Tables~\ref{tab:products}, \ref{tab:smooth} and \ref{tab:remaining}, along with their orders.  It turns out that in most cases the isomorphism type of $A\times_{(F,\tau)} B$ is independent of $\tau$, a fact we prove in Appendix~\ref{sec:struct-fibr-prod}.  We do not work out the isomorphism type in all cases, since this takes some effort and it is not clear how useful this information actually is.  The three tables just mentioned contain the list of finite subgroups of $\Sp(1) \times \Sp(1)$, up to the action of the automorphism group $\Aut(\Sp(1) \times \Sp(1))$, which is an extension of the adjoint group $\SO(3) \times \SO(3)$ of inner automorphisms by the $\ZZ_2$-group generated by interchanging the two $\Sp(1)$ factors.

Although the meaning of $A\times_{(F,\tau)} B$ is explained formally in Section~\ref{sec:gours-theory-subgr}, perhaps it is convenient to describe it here more informally.  This is how, in practice, we work with such fibred products.  The fact that $A$ and $B$ have $F$ as a common factor group, means that there exist normal subgroups $A_0 \lhd A$ and $B_0 \lhd B$ such that there is a group isomorphism $A/A_0 \cong B/B_0$.  Once such an isomorphism has been chosen, the fibred product $A\times_{(F,\tau)} B$ is given by the preimage of the graph of $\tau$ in $F\times F$ under the map $\pi$ in the exact sequence
\begin{equation}
  \begin{CD}
    1 @>>> A_0 \times B_0 @>>> A \times B @>\pi>> F \times F @>>> 1.
  \end{CD}
\end{equation}
It consists of pairs $(a,b) \in A \times B$ such that $bB_0 = \tau(aA_0)$, where we write $aA_0$ and $bB_0$ for the $A_0$-coset of $a$ and the $B_0$-coset of $b$, respectively.  This description of $A\times_{(F,\tau)} B$ allows us to deconstruct the corresponding orbifold as a sequence of quotients: first we quotient by $A_0 \times B_0$ and the we quotient the resulting orbifold by $F$.  The quotients by $A_0 \times B_0$ and $F$ can themselves be decomposed into simpler quotients, et cetera.  In the end, all but a handful of orbifolds can be described as a small number of iterated cyclic quotients.  The results can be read off of several tables: Table~\ref{tab:productsiterated} for the product groups, Table \ref{tab:smoothiterated} for the smooth ``orbifolds'' and Table~\ref{tab:remainingiterated} for the remaining groups.  The notation in those tables is explained in the bulk of the paper, but we summarise it here for the impatient:  $\omega_n = e^{2\pi i/n}$ (but we use $\xi = \omega_8$) and $\zeta = e^{i\pi/4}e^{j\pi/4}$.  We give the generators of the cyclic groups we quotient by \emph{not} as abstract generators, but as explicit elements of $\Sp(1) \times \Sp(1)$.

For example, let us consider the orbifold $S^7/\Gamma$, where $\Gamma = \sE_6 \times_T \sE_6$ has order 48.  We can perform this quotient in steps.  We first consider the quotient $X$ of $S^7$ by the $\ZZ_2 \times \ZZ_2$ subgroup of $\Sp(1)\times\Sp(1)$ generated by $(-1,1)$ and $(1,-1)$.  The element $(i,i)$ acts on $X$ with order $2$, since $(i,i)^2 = (-1,-1)$ and this is an element of the $\ZZ_2 \times \ZZ_2$ group we  quotiented $S^7$ by in order to obtain $X$.  Let $Y$ be the quotient of $X$ by $(i,i)$.  On $Y$ we have an action of $(j,j)$, which again has order $2$.  Let the corresponding quotient be $Z$. Finally, on $Z$ we have the action of $(\zeta,\zeta)$ which has order $3$, since $(\zeta,\zeta)^3 = (-1,-1)$.  That $\ZZ_3$-quotient of $Z$ is $S^7/\Gamma$.  What this cyclic decomposition of the group $\Gamma$ is doing is basically allowing us to write the elements of $\Gamma$ uniquely as a word in some cyclic generators \emph{with a chosen order} of those generators, and this suggests performing the quotient by each such generator in turn.

\section{Spherical orbifolds}
\label{sec:spherical-orbifolds}

Let $S^7$ denote the unit sphere in $\RR^8$.  The Lie group of orientation-preserving isometries of $S^7$ is the special orthogonal group $\SO(8)$.  Let $\Gamma < \SO(8)$ be a finite group.  The quotient $S^7/\Gamma$, obtained by identifying points on the sphere which are on the same orbit of $\Gamma$, will be smooth if and only if $\Gamma$ acts freely; that is, with trivial stabilisers.  Recall that the stabilizer $\Gamma_x$ of a point $x \in S^7$ is the subgroup of $\Gamma$ consisting of all the elements of $\Gamma$ which fix $x$.  The action of $\Gamma$ is free if $\Gamma_x = \{1\}$ for all $x \in S^7$.  The determination of the smooth quotients of $S^7$ (or, more generally, any round sphere) is the so-called spherical space-form problem, which has a long history culminating in Wolf's classification \cite{Wolf}.  In this paper we shall be concerned with quotients which are not necessarily smooth; equivalently, with finite subgroups of $\SO(8)$ which do not act freely.  We shall call them orbifolds; although in the mathematical literature they are more precisely known as \emph{global} orbifolds.

\subsection{Spin orbifolds}
\label{sec:spin-orbifolds}

An often under-appreciated fact is that the bundle of oriented orthonormal frames of an orbifold is actually a smooth manifold.  For the case of $S^7/\Gamma$ (or indeed any orbifold of a round sphere) we can see this explicitly.  First of all, we observe that the bundle of oriented orthonormal frames $P_{\SO}(S^7)$ of $S^7$ is diffeomorphic to the Lie group $\SO(8)$.  To see this let $x \in S^7$: it is a unit-norm vector in $\RR^8$.  The tangent space to $S^7$ at $x$ is the hyperplane in $\RR^8$ perpendicular to $x$.  An orthonormal frame for $T_xS^7$ is a set of seven unit-norm vectors in $\RR^8$ which are perpendicular to $x$ and to each other.  Together with $x$ they make up an orthonormal frame for $\RR^8$, whence the $8 \times 8$ matrix with these vectors as columns and $x$ in the first column, say, is orthogonal.  An orientation on $S^7$ is a choice of sign for the determinant of this orthogonal matrix; equivalently, it is a choice of connected component in the orthogonal group $\Ort(8)$.  Either choice results in a manifold diffeomorphic to $\SO(8)$.  Let us, for definiteness, choose the orientation corresponding to unit determinant, so that $P_{\SO}(S^7) \cong \SO(8)$.  The projection $\SO(8) \to S^7$ is the map which selects the first column of the matrix.

A finite subgroup $\Gamma<\SO(8)$ acts linearly on $\RR^8$ and the action on $S^7$ is the restriction of this action.  Its lift to the tangent bundle is also a linear action.  In other words, under the identification $P_{\SO}(S^7) = \SO(8)$, the action of $\gamma$ on $P_{\SO}(S^7)$ is just left matrix multiplication in $\SO(8)$, which is a free action.  Furthermore this action is compatible with the bundle projection $P_{\SO}(S^7) \to S^7$.  The space of right $\Gamma$-cosets in $\SO(8)$, which is a smooth manifold, can be identified with the bundle of orthonormal frames $P_{\SO}(S^7/\Gamma)$ of the orbifold.  Notice that it is \emph{not} a principal $\SO(7)$ bundle: if it were, the right action of $\SO(7)$ would be free and the quotient a manifold instead of an orbifold.

The sphere $S^7$ has a unique spin structure.  Indeed, the total space of the spin bundle is diffeomorphic to the Lie group $\Spin(8)$ and the covering homomorphism $\Spin(8) \to \SO(8)$ is the bundle morphism from the spin bundle $P_{\Spin}(S^7)$ to $P_{\SO}(S^7)$.  Now the orbifold $S^7/\Gamma$ admits a spin structure if and only if the action of $\Gamma$ on $P_{\SO}(S^7)$ lifts to $P_{\Spin}(S^7)$ in a way that is compatible with the bundle map $P_{\Spin}(S^7) \to P_{\SO}(S^7)$.  This is equivalent to the existence of a lift of $\Gamma$ to an \emph{isomorphic} subgroup (also denoted $\Gamma$) of $\Spin(8)$ (as opposed to a double cover), which acts by left multiplication on $\Spin(8)$.  This action is again free and we define the spin bundle of the orbifold $S^7/\Gamma$ to be the smooth manifold of right $\Gamma$-cosets in $\Spin(8)$.  If we think of $\Spin(8)$ as sitting inside the Clifford algebra $\Cl(8)$, then the condition that $\Gamma$ map isomorphically onto its image in $\SO(8)$ is that $-\one \not\in \Gamma$.  There are topological obstructions to the existence of such lifts and even when the obstruction is overcome there may be more than one lift, each one corresponding to a different spin structure on the orbifold.  These are classified, as in the case of manifolds, by $\Hom(\Gamma,\ZZ_2)$, which boils down to introducing signs for each generator of $\Gamma$ in a way consistent with the relations.

Bär's cone construction \cite{Baer} relates Killing spinors on $S^7$ to parallel spinors on $\RR^8$.  More precisely the vector space of Killing spinors (with Killing constant $\half$, say) on $S^7$ is isomorphic to the space of parallel positive-chirality spinors on $\RR^8$.  Furthermore this correspondence is equivariant with respect to the action of $\Spin(8)$, as shown more generally in \cite{JMFKilling} in a similar context to that of the present paper. Relative to flat coordinates for $\RR^8$, a parallel spinor is a constant spinor, and hence the space of Killing spinors is isomorphic, as a representation of $\Spin(8)$, to $\Delta_+$, the positive-chirality spinor irreducible representation of $\Spin(8)$.  The subgroup $\Gamma < \Spin(8)$ acts naturally on this representation and the Killing spinors on $S^7/\Gamma$ can be identified with the $\Gamma$-invariant spinors in $\Delta_+$.  They form a vector space $\Delta_+^\Gamma$ of dimension $0\leq \eN \leq 8$.

\subsection{Statement of the problem}
\label{sec:statement-problem}

We are interested in classifying orbifolds $S^7/\Gamma$, up to isometry, admitting real Killing spinors (and hence a spin structure) and such that the (real) dimension $\eN$ of the vector space of real Killing spinors be $\geq 4$.  This is equivalent to classifying subgroups $-\one \not\in \Gamma < \Spin(8)$, up to conjugation, such that $\dim\Delta_+^\Gamma \geq 4$.  We shall call such subgroups $\Gamma$ \emph{admissible} in this paper.  It is important to stress that it is conjugation in $\Spin(8)$ that we have to consider as our basic equivalence relation and not the weaker conjugation of $\Gamma$ in $\SO(8)$.  Certainly quotients by conjugate subgroups of $\SO(8)$ are isometric, but this is not enough to guarantee that their spin bundles are also isomorphic, and hence that the dimension of the space of real Killing spinors be equal.  This is consistent with the fact that supergravity backgrounds are not merely orbifolds \emph{admitting} a spin structure, but orbifolds with a choice of spin structure on them.  It is easy to see that if $\Gamma$ and $\Gamma'= g \Gamma g^{-1}$, for some $g \in \Spin(8)$, are admissible subgroups, then $S^7/\Gamma$ and $S^7/\Gamma'$ are isomorphic as spin orbifolds and hence the corresponding Freund--Rubin backgrounds are equivalent.

Let $\Gamma < \Spin(8)$ be an admissible subgroup.  Let $\delta_+: \Spin(8) \to \SO(\Delta_+)$ denote the chiral spinor representation of $\Spin(8)$.  Then since $\dim \Delta_+^\Gamma =\eN$, the image of $\Gamma$ under $\delta_+$ is contained in an $\SO(8-\eN)$ subgroup of $\SO(\Delta_+)$ corresponding to the perpendicular complement $(\Delta_+^\Gamma)^\perp$ of the subspace of $\Gamma$-invariant spinors.  This means that $\Gamma$ is contained in a $\Spin(8-\eN)$ subgroup of $\Spin(8)$.  Now $\Spin(8)$ acts transitively on the grassmannian of $\eN$-planes in $\Delta_+$, so we can use the freedom to conjugate $\Gamma$ in $\Spin(8)$ to ensure that $\Gamma$ belongs to a particular $\Spin(8-\eN)$ subgroup.  (This does not mean that all $\Spin(8-\eN)$ subgroups of $\Spin(8)$ are conjugate, just those which leave pointwise invariant an $\eN$-dimensional subspace of $\Delta_+$.)

Let us consider, for definiteness, the case $\eN=4$, since $\eN>4$ will just be a specialisation of this case.  We will fix some of the freedom to conjugate in $\Spin(8)$ by the requirement that $\Gamma$ belongs to a particular $\Spin(4)$ subgroup of $\Spin(8)$.  We choose this subgroup to be such that its action on $S^7$ is as follows.  First notice that $\Spin(4) \cong \Sp(1) \times \Sp(1)$, with $\Sp(1)$ the Lie group of unit-norm quaternions.  This group acts naturally on $\HH^2 \cong \HH \oplus \HH$ via left quaternion multiplication:
\begin{equation}
  (u_1,u_2) \cdot (x_1,x_2) = (u_1 x_1, u_2 x_2)~,
\end{equation}
where $u_i\in\Sp(1)$ and $x_i \in \HH$, in such a way that it preserves the unit sphere $S^7 \in \HH^2$.  One can check that this $\Spin(4)$ leaves invariant pointwise a four-plane in $\Delta_+$.  Once we conjugate $\Gamma$ to lie inside this $\Spin(4)$ subgroup, we still have the freedom to conjugate by the normaliser in $\Spin(8)$ of this $\Spin(4)$ subgroup.  Of course, the normaliser contains the $\Spin(4)$ subgroup itself, but it also contains other elements of $\Spin(8)$ inducing non-inner automorphisms of the $\Spin(4)$ subgroup.  As shown, e.g., in \cite[§3]{FSS3}, there is a unique nontrivial outer automorphism of $\Spin(4) \cong \Sp(1)\times \Sp(1)$, and it is represented by the automorphism which interchanges the two $\Sp(1)$ subgroups: $(u_1,u_2) \mapsto (u_2, u_1)$.  Indeed, since the Lie algebras of the $\Sp(1)$ subgroups correspond to the self-dual and anti-self-dual 2-forms in $\Lambda^2 (\Delta_+^\Gamma)^\perp$, all we need to do to interchange them is to change the orientation of $(\Delta_+^\Gamma)^\perp$.  This can be done by a rotation of $\pi$ degrees in a 2-plane spanned by a spinor in $(\Delta_+^\Gamma)^\perp$ and a spinor in $\Delta_+^\Gamma$ and such a rotation is induced by conjugating by an element in $\Spin(\Delta_+)$, which becomes an element in $\Spin(8)$ after a triality transformation.

In summary, we want to classify finite subgroups $\Gamma$ of $\Sp(1) \times \Sp(1)$ up to conjugation in $\Sp(1) \times \Sp(1)$ and the outer automorphism which swaps the two $\Sp(1)$ subgroups; in other words, we want to solve the following

\begin{problem}
  Classify the finite subgroups of $\Sp(1) \times \Sp(1)$ up to automorphisms of $\Sp(1) \times \Sp(1)$.
\end{problem}

This will result from an application of Goursat's theory of subgroups of a direct product of groups, but before doing so, we review the finite subgroups of $\Sp(1)$, which will be the main ingredients in terms of which our results will be phrased.

\section{Finite subgroups of the quaternions}
\label{sec:finite-subgr-quat}

In this section we review very briefly, mostly to settle the notation and to make this paper reasonably self-contained, the classification of finite subgroups of the quaternions in terms of simply-laced (extended) Dynkin diagrams.

Let $\HH$ denote the skew field of quaternions.  Multiplication in $\HH$ being associative, we can talk about multiplicative subgroups.  Let $G < \HH$ be a finite such subgroup.  Since $G$ is a finite group, every $x \in G$ obeys $x^n =1$ for some finite $n$ and since $\HH$ is a normed algebra, it follows that $|x|^n = |x^n| = 1$, whence $|x|=1$.  In other words, $G$ is actually a subgroup of $\Sp(1)$, the Lie group of unit-norm quaternions.  The adjoint representation of $\Sp(1)$ defines a covering homomorphism $\Sp(1) \to \SO(3)$, once we choose an orthonormal basis for the Lie algebra of $\Sp(1)$, which is the three-dimensional real vector space of imaginary quaternions.  Therefore every finite subgroup $G < \Sp(1)$ projects to a finite subgroup $\overline{G} < \SO(3)$.

The classification of finite subgroups of $\SO(3)$ is classical and a nice treatment can be found in Elmer Rees's \emph{Notes on geometry} \cite{MR681482}.  A result due to Euler states that any nontrivial rotation in $\RR^3$ is the rotation about some axis.  That axis intersects the unit sphere in $\RR^3$ at two points: the \emph{poles} of the rotation.  Let $P$ denote the set of all the poles of the non-identity elements in $\overline{G}$.  The finite group $\overline{G}$ acts on the finite set $P$ and a careful application of the orbit-stabilizer theorem shows that it does so with either two or three orbits.  The case with two orbits corresponds to a cyclic group, where the two orbits correspond to the two poles common to all the rotations.  The case with three orbits corresponds either to a dihedral group or to the group of symmetries of a Platonic solid: namely, the tetrahedral, octahedral or icosahedral groups.

One then determines the possible lifts of these finite rotation subgroups to $\Sp(1)$, thus determining the finite subgroups of quaternions.  They turn out to be classified by the ADE Dynkin diagrams and tabulated in Table~\ref{tab:ADE}.  They are given by cyclic groups, binary dihedral groups and binary polyhedral groups.  In the rows labelled $\sA_{n-1}$ and $\sD_{n+2}$ we take $n\geq 2$.  In the last row, $\phi = \half(1+\sqrt{5})$ is the Golden Ratio.  It should be stressed that the labels $\sA$,$\sD$ and $\sE$ refer to the explicit subgroups in this table, with the generators shown and not just to the isomorphism class of such subgroups.

\begin{table}[h!]
  \caption{Finite subgroups of $\Sp(1)$}
  \centering
  \begin{tabular}{>{$}l<{$}|>{$}c<{$}|>{$}c<{$}|>{$}l<{$}}
    \multicolumn{1}{c|}{Dynkin} & \text{Name} & \text{Order} & \text{Quaternion generators}\\\hline
    \sA_{n-1} & \ZZ_n & n & e^{\frac{i2\pi}{n}}\\
    \sD_{n+2} & 2D_{2n} & 4n & j, e^{\frac{i\pi}{n}}\\
    \sE_6 & 2T & 24 & \frac{(1+i)(1+j)}{2},\frac{(1+j)(1+i)}{2}\\
    \sE_7 & 2O & 48 & \frac{(1+i)(1+j)}{2},\frac{1+i}{\sqrt{2}}\\
    \sE_8 & 2I & 120 & \frac{(1+i)(1+j)}{2},\frac{\phi + \phi^{-1}i + j}{2}
  \end{tabular}
  \label{tab:ADE}
\end{table}

The Dynkin label is due to the McKay correspondence \cite{MR604577} and goes as follows.  We associate a graph to a finite subgroup $G < \Sp(1)$ as follows.  Since $G$ is finite, there are a finite number of irreducible representations: $V_0,V_1,\dots,V_r$.  In addition there is a ``fundamental'' representation $R$ (not necessarily irreducible) obtained by restricting the two-dimensional complex representation of $\Sp(1)$ given by the isomorphism $\Sp(1) \cong \SU(2)$.  The vertices of the graph are labelled by the irreducible representations of $G$ and we draw an edge between vertices $V_i$ and $V_j$ if and only if $V_j$ appears in the decomposition into irreducibles of the tensor product representation $V_i \otimes R$.  It turns out that this incidence relation is symmetric, since $V_i$ appears in $V_j \otimes R$ if and only if $V_j$ appears in $V_i \otimes R$.  The resulting graph is therefore not directed.  It can be shown to be an extended Dynkin diagram in the ADE series.  Deleting the vertex corresponding to the trivial representation, as well as any edge incident on it, we obtain the corresponding Dynkin diagram.

\section{Goursat's Lemma}
\label{sec:goursats-lemma}

The determination of the finite subgroups of $\Sp(1) \times \Sp(1)$ follows from work of Goursat \cite{Goursat}. In fact, Goursat's original work was motivated by the determination of finite subgroups of $\SO(4)$ and he seems to have solved this problem (modulo a few missing cases) by noticing that $\SO(4)$ is covered by a direct product group, namely $\Sp(1) \times \Sp(1)$, and determining the subgroups of such a direct product.  This ought to make it possible to read the results from Goursat's paper, but this is a serious undertaking comparable in magnitude to (but significantly less fun than) recovering his results independently departing from his basic idea: the (algebraic) Goursat Lemma, which we discuss presently.  In fact, there is a further complication in that, according to \cite{MR1957212}, Goursat's classification is incomplete, as is Du Val's \cite{MR0169108}.  This is remedied in \cite{MR1957212}.  In Appendix~\ref{sec:finite-subgroups-so4} we recover the classification of finite subgroups of $\SO(4)$ (up to the action of the automorphisms of $\SO(4)$).  Since our result agrees with the classification in \cite{MR1957212}, this provides an independent check of both results.

\subsection{Goursat's theory of subgroups}
\label{sec:gours-theory-subgr}

The question is to determine (at least in principle) the subgroups of a direct product of groups.   The short answer is that subgroups are in bijective correspondence with graphs of isomorphisms between factor groups of the groups in question.  Let us elaborate.

Let $A,B$ be groups and let $C < A \times B$ be a subgroup.  We have canonical group homomorphisms $\lambda: C \to A$ and $\rho: C \to B$ obtained by restricting the cartesian projections $\pi_1: A \times B \to A$ and $\pi_2: A \times B \to B$ to $C$.  In fact, it may be convenient to think of $C$ not as a subgroup of $A \times B$ but as an injective homomorphism $\iota : C \to A \times B$ and the maps $\lambda,\rho$ as its compositions with $\pi_1,\pi_2$, respectively.  In summary, we have the following commutative diagram of groups:
\begin{equation}
  \label{eq:diagram}
  \xymatrix{
    & C \ar@/_/[ddl]_\lambda \ar@/^/[ddr]^\rho \ar[d]^\iota & \\
    & A \times B \ar[dl]^{\pi_1} \ar[dr]_{\pi_2} & \\
    A & & B
  }
\end{equation}
Clearly if $C$ is a subgroup of $A \times B$, it's actually a subgroup of $\lambda(C) \times \rho(C)$, whence without any loss of generality we can and will assume that $\lambda: C \to A$ and $\rho: C \to B$ are surjective.  The kernels of $\lambda$ and $\rho$ define normal subgroups of $C$ and since the image of a normal subgroup under a surjective homomorphism is again normal, we get normal subgroups $A_0 := \lambda(\ker\rho)$ of $A$ and $B_0 := \rho(\ker\lambda)$ of $B$.  In terms of elements,
\begin{equation}
  \label{eq:normal}
  A_0 = \left\{a \in A \middle | (a,1) \in C \right\} \qquad\text{and}\qquad   B_0 = \left\{b \in B \middle | (1,b) \in C \right\}~,
\end{equation}
where $1$ denotes the identity element --- after all we are eventually interested in subgroups of the quaternions, where the identity  element is indeed the number $1$ --- and where we have identified $C$ with its image under $\iota$ in $A \times B$.

Goursat's Lemma states that the factor groups $A/A_0$ and $B/B_0$ are isomorphic, the isomorphism being given essentially by $C$.  Let us define a map $\overline\varphi: A \to B/B_0$ using the elements of $C$: if $(a,b) \in C$, we define $\overline\varphi(a) = b B_0$.  This map is well-defined, because if $(a,b_1)$ also belongs to $C$, then so do $(a,b_1)^{-1} = (a^{-1},b_1^{-1})$ and the product $(a,b_1)^{-1}(a,b) = (1, b_1^{-1} b)$, whence $b_1^{-1} b \in B_0$ and hence $b_1 B_0 = b_1 b_1^{-1} b B_0 = bB_0$.  The map $\overline\varphi$ is surjective, since by definition for every $b \in B$, there is some $a \in A$ so that $(a,b) \in C$.  The kernel of $\overline\varphi$ consists of those $a \in A$ with $(a,b_0) \in C$ for some $b_0 \in B_0$.  This means that $(1,b_0) \in C$, and hence $(a,b_0)(1,b_0)^{-1} = (a,1) \in C$, whence $a \in A_0$.  Conversely if $a \in A_0$, then $(a,1) \in C$ and hence $a \in \ker\overline\varphi$.  Therefore $\ker \overline\varphi = A_0$ and hence $\overline\varphi$ induces an isomorphism $\varphi: A/A_0 \xrightarrow{\cong} B/B_0$, sending $aA_0 \mapsto bB_0$ for $(a,b) \in C$.

Let $F$ be an abstract group isomorphic to both $A/A_0$ and $B/B_0$ and let us choose, once and for all, isomorphisms $A/A_0 \xrightarrow{\cong} F$ and $B/B_0 \xrightarrow{\cong} F$.  Then the isomorphism $\varphi: A/A_0 \to B/B_0$ defined by $C$ induces an automorphism of $F$.

In summary, $C$ determines (and is determined by) the following data:
\begin{enumerate}
\item groups $A$ and $B$,
\item normal subgroups $A_0 \lhd A$ and $B_0 \lhd B$ with $A/A_0 \cong B/B_0 \cong F$, and
\item an automorphism of $F$.
\end{enumerate}

To reconstruct $C$ from the above data, consider the following exact sequence
\begin{equation}
  \begin{CD}
    1 @>>> A_0 \times B_0 @>>> A \times B @>>> F \times F @>>> 1,
  \end{CD}
\end{equation}
obtained canonically from the data above, including the isomorphisms $A/A_0 \cong F$ and $B/B_0 \cong F$.  Now consider the subgroup $F_\tau$ of $F \times F$ given by the graph of an automorphism $\tau: F \to F$.  Explicitly, $F_\tau$ is the subgroup of $F \times F$ consisting of elements $\left\{(x,\tau(x)) \middle | x \in F\right\}$ and isomorphic to $F$.  (The isomorphism is projecting onto the first factor, for example.) The subgroup $C$ of $A\times B$ is then the subgroup which maps to $F_\tau$ under the map $A\times B \to F \times F$.  Since $A_0 \times B_0$ is the kernel of this map, we see that $A_0 \times B_0$ is contained in $C$ as a normal subgroup.   In other words, $C$ is an extension of $F_\tau$ by $A_0 \times B_0$:
\begin{equation}
  \begin{CD}
    1 @>>> A_0 \times B_0 @>>> C @>>> F_\tau @>>> 1.
  \end{CD}
\end{equation}

Equivalently, $C$ can be interpreted as a fibred product, also known as a categorical pullback.  To see this let us fix isomorphisms $i_A: A/A_0 \to F$ and $i_B: B/B_0 \to F$ once and for all.  Let $\tau \in \Aut(F)$ be an automorphism.  Let $\beta: B \to F$ denote the composition
\begin{equation}
  \label{eq:beta}
  \xymatrix{ B \ar[r] & B/B_0 \ar[r]^{i_B} & F }
\end{equation}
and $\alpha : A \to F$ denote the composition
\begin{equation}
  \label{eq:alpha}
  \xymatrix{ A \ar[r] & A/A_0 \ar[r]^{i_A} & F \ar[r]^\tau & F~,}
\end{equation}
where the maps $A \to A/A_0$ and $B \to B/B_0$ are the canonical ones.  It is clear that both $\alpha,\beta$ are surjective.  Then the subgroup $C < A \times B$ is the pullback
\begin{equation}
  \label{eq:pullback}
  \xymatrix{
    C \ar[r]^\rho \ar[d]_\lambda & B \ar[d]^\beta \\
    A \ar[r]^\alpha & F
    }
\end{equation}
or in terms of elements
\begin{equation}
  \label{eq:elements}
  C = \left\{(a,b) \in A \times B \middle | \alpha(a) = \beta(b)\right\}~,
\end{equation}
which one recognises as the fibred product $A \times_F B$.  It must be stressed that although the notation may not reflect it, the data defining the fibred product $A\times_F B$ are not just the groups $A,B,F$, but indeed the surjections $\alpha: A\to F$ and $\beta: B \to F$.  In particular, $\alpha$ incorporates the automorphism $\tau$ of $F$.  We therefore reluctantly introduce the notation $A \times_{(F,\tau)} B$ to reflect this.  We will say that the fibred product in this case is \emph{twisted} by $\tau$.  In Appendix~\ref{sec:struct-fibr-prod} we will prove that for almost all of the fibred products in this paper, twisting results in abstractly isomorphic groups.

\subsection{The case of interest}
\label{sec:case-interest}

Let us now consider the case at hand: classifying finite subgroups $\Gamma < \Sp(1) \times \Sp(1)$ up to automorphisms of $\Sp(1)\times\Sp(1)$.  Such subgroups will be determined by a pair of finite subgroups $A < \Sp(1)$ and $B < \Sp(1)$ having isomorphic factor groups \emph{and} the explicit isomorphism between the two factor groups.  Equivalently, $\Gamma$ is determined by the following data:
\begin{enumerate}
\item finite subgroups $A < \Sp(1)$ and $B < \Sp(1)$,
\item normal subgroups $A_0 \lhd A$ and $B_0 \lhd B$ such that $A/A_0 \cong B/B_0 \cong F$, where $F$ is some abstract finite group, and
\item an automorphism of $F$.
\end{enumerate}

As discussed in Section \ref{sec:statement-problem}, we are interested in the subgroup $\Gamma$ up to conjugation in the normaliser of the $\Sp(1) \times \Sp(1)$ in $\Spin(8)$.  We can use the freedom to conjugate in $\Sp(1) \times \Sp(1)$ to choose $A$ and $B$ to be fixed ADE subgroups of $\Sp(1)$, thus given by the groups in Table \ref{tab:ADE}.  Furthermore we can use the freedom to swap the two $\Sp(1)$ subgroups, which is induced by conjugation in $\Spin(8)$, in order to order the two ADE subgroups, say, alphabetically.  This still leaves the freedom to conjugate by their normalisers $N(A)$ and $N(B)$ in $\Sp(1)$.

Next we have to choose normal subgroups $A_0 \lhd A$ and $B_0 \lhd B$ with isomorphic factor groups $A/A_0 \cong B/B_0$.  Any pair $A$, $B$ possesses such normal subgroups, since we can always take $A_0=A$ and $B_0=B$, but some pairs will also have more interesting normal subgroups as we will see in the next section.  Conjugation by the normaliser $N(A)$ permutes the normal subgroups of $A$ (and similarly for $B$) and two normal subgroups so related are deemed to be equivalent for our purposes.  This means that once we choose $A_0$ and $B_0$ in their equivalence classes, the only ingredient left to choose is the automorphism $\tau$ of $F$ and the only freedom left is conjugation by $N_0(A) \times N_0(B)$, where $N_0(A)$ and $N_0(B)$ are the stabilisers of $A_0$ in $N(A)$ and of $B_0$ in $N(B)$.

As we now show, the group $N_0(A) \times N_0(B)$ acts on $\Aut(F)$.  Let us denote by $a \mapsto [a]$ the composition $A \to A/A_0 \to F$ (where the second map is $i_A$) and similarly by $b \mapsto [b]$ the composition $B \to B/B_0 \to F$, where the second map is $i_B$.  Let $\mu \in \Aut(F)$ be the automorphism of $F$ induced by conjugation with $x \in N_0(A)$.  In other words, $\mu[a] = [x a x^{-1}]$.  One can check that this is well-defined and defines an automorphism of $F$.  Similarly, let $\nu$ be the automorphism of $F$ induced by conjugation with $y \in N_0(B)$.  Then it is easy to verify that $(a,b) \in A \times_{(F,\tau)} B$ if and only if $(x a x^{-1}, y b y^{-1}) \in A \times_{(F,\nu \circ \tau \circ \mu^{-1})} B$.
In other words, $(x,y) \in N_0(A) \times N_0(B)$ acts on $\tau \in \Aut(F)$ by $\tau \mapsto \nu \circ \tau \circ \mu^{-1}$ and two automorphisms of $F$ which are so related give rise to equivalent orbifolds.  This suggests that we introduce the set of \emph{twists}
\begin{equation}
  \label{eq:twists}
  \Twist(F) := \Aut(F)/\left(N_0(A)\times N_0(B)\right)~,
\end{equation}
which is in general \emph{not} a group, but merely the quotient of $\Aut(F)$ by the above action of $N_0(A)\times N_0(B)$.  In practice we will make a choice of representative automorphism for each equivalence class in $\Twist(F)$.  The notation is also not particularly good, since $\Twist(F)$ does not just depend on the abstract group $F$ but on $F$ as a common quotient of $A$ and $B$.

Notice that because $A_0$ and $B_0$ are normal subgroups, $N_0(A)$ and $N_0(B)$ contain the inner automorphisms of $A$ and $B$ and these surject onto the inner automorphisms of $F$, whence in order to compute the set of twists we can in the first instance restrict to the group $\Out(F)$ of outer automorphisms of $F$ and then investigate which outer automorphisms of $F$ are related by the action of $N_0(A)\times N_0(B)$.

In summary, $\Gamma$ is determined up to isomorphisms, by the following data:
\begin{enumerate}
\item an unordered pair $\{A,B\}$ of subgroups of $\Sp(1)$ taken from Table~\ref{tab:ADE};
\item normal subgroups $A_0 \lhd A$ and $B_0 \lhd B$, each representing its equivalence class under conjugation by the normalisers in $\Sp(1)$ of $A$ and $B$, respectively, and such that $A/A_0 \cong B/B_0 \cong F$ where $F$ is some abstract finite group, and
\item a representative automorphism $\tau \in \Twist(F)$.
\end{enumerate}

Finally, we make some comments about the smooth case.  As shown in \cite{deMedeiros:2009pp} those subgroups of $\Sp(1) \times \Sp(1)$ which act freely on $S^7$ are such that $A_0 = B_0 = \left\{1\right\}$, whence $A \cong B$.  They are given by the graph in $A \times A$ of an automorphism $\tau : A \to A$ of a fixed ADE subgroup $A < \Sp(1)$.

We end with the observation that quotients with $\eN>4$ are automatically smooth: indeed, if $S^7/\Gamma$ has $\eN\geq 5$, then $\Gamma < \Spin(3)$, where the $\Spin(3)$ subgroup is the diagonal $\Sp(1)$ subgroup of $\Sp(1) \times \Sp(1)$.  So the elements of $\Gamma$ are of the form $(a,a)$ for $a \in A$ some ADE subgroup of $\Sp(1)$.  Therefore we see from \eqref{eq:normal} that $A_0 = B_0 = \{1\}$.  Hence these subgroups are already considered in \cite{deMedeiros:2009pp}, whence the novelty in this paper consists of extending the classification of $\eN=4$ quotients by the inclusion of singular quotients.  Since homogeneity implies smoothness, the fact that $\eN>4$ orbifolds are actually smooth is of course consistent with the homogeneity conjecture \cite{FMPHom,EHJGMHom,JMF-HC-Lecs} which says, in this context, that backgrounds preserving mode than half the supersymmetry are homogeneous.

\section{Quotients of ADE subgroups}
\label{sec:quotients-ade}

In this section we record the normal subgroups of the ADE subgroups and the corresponding factor groups.  These are the abstract finite groups $F$ appearing in Goursat's Lemma.  The results are summarised in Table~\ref{tab:normal}.

\subsection{Cyclic groups}
\label{sec:cyclic-groups}

Every subgroup of an abelian group is normal and every subgroup of a cyclic subgroup is cyclic.  Indeed, let $G = \left<t\right>$ with $t^n = 1$  be a cyclic group of order $n$ and let $N\lhd G$ be a normal subgroup.  Let $0<k<n$ be the smallest integer such that $t^k \in N$.  Then we claim that $N = \left<t^k\right>$.  Indeed, let $t^\ell \in N$ and write $\ell = qk + r$, where the remainder $0\leq r < k$.  Since $t^\ell$ and $(t^k)^q$ belong to $N$, so does $t^r$ and this would violate the minimality of $k$ unless $r=0$.

The ADE group $\sA_{n-1}$ is cyclic of order $n$.  Its normal groups are $\sA_{m-1}$ for any $m$ which is a divisor of $n$ and the factor groups are thus also cyclic groups isomorphic to $\ZZ_k$ with $km=n$.

\subsection{Binary dihedral groups}
\label{sec:binary-dihedr-groups}

The normal subgroups of the binary dihedral groups can be determined in a similar way to how those of the dihedral groups are obtained (see, e.g., \cite[Lemma~1.1]{MR0308206} for $D_{4k+2}$ and \cite[Lemma~2.1]{MR0330236} for $D_{4k}$).

Let $\sD_{n+2}$ be the subgroup of $\Sp(1)$ generated by the quaternions $s=j$ and $t = e^{i\pi/n}$.  Abstractly, it has the presentation
\begin{equation}
 \sD_{n+2} = \left<s,t \middle | s^2 = t^n = (st)^2\right>~.
\end{equation}
Notice that $s^2= t^n = (st)^2 = -1$.

We first observe that any subgroup of the cyclic subgroup generated by $t$ is normal.  To see this simply notice that from $stst=-1$ one finds
\begin{equation}
  sts^{-1} = -sts = -ststt^{-1} = t^{-1} \implies st^k s^{-1} = t^{-k}~,
\end{equation}
but if $t^k$ belongs to a subgroup, then so does its inverse $t^{-k}$.  Since $t$ has order $2n$, any subgroup is generated by $t^k$ for some $k|2n$.  We must distinguish between two cases: when $k|n$ and $k\not|\,n$, this latter case forcing $k$ to be even.

If $k|n$, then let $n=kl$.  Then the normal subgroup $\left<t^k\right> \cong \ZZ_{2l}$ is the ADE subgroup $\sA_{2l-1}$.  Let $F$ denote the factor group $\sD_{n+2}/\sA_{2l-1}$ and let $x\mapsto [x]$ denote the canonical surjection $\sD_{n+2} \to F$.  Then since $-1\in \sA_{2l-1}$, $[s]^2 = 1$, $([s][t])^2 = 1$ and $[t]^k = 1$, whence $F \cong D_{2k}$, with the understanding that $D_2 \cong\ZZ_2$ and $D_4 \cong \ZZ_2 \times \ZZ_2$.  In other words, $\sD_{n+2}$ for any $n$, has a normal cyclic subgroup of index 2.  We will see below that for $n$ even, there are in addition two nonabelian normal subgroups of index 2.

If $k\not|\,n$, then since $k|2n$, we see that $k$ must be even: say, $k=2p$ with $p|n$.  Moreover, $n/p$ must be odd, otherwise $k|n$.  In summary, $n = p(2l+1)$.  The normal subgroup $\left<t^k\right> \cong \ZZ_{2l+1}$ is the ADE subgroup $\sA_{2l}$ and let $F$ now denote the factor group $\sD_{n+2}/\sA_{2l}$.  Since $-1 \not\in \sA_{2l}$, $[s]^2=[-1]$ and $([s][t])^2=[-1]$.  Moreover, one has that
\begin{equation}
  [-1] = [t]^n = [t]^{p(2l+1)} = [t]^{kl} [t]^p = [t]^p~,
\end{equation}
whence $F \cong 2D_{2p} = 2D_k$, with the understanding that $2D_2 \cong \ZZ_4$.

Any other normal subgroup $N$ which is not a subgroup of $\left<t\right>$ must have an element of the form $st^k$ for some $k$.  Since
\begin{equation}
  (st^k)^2 = st^kst^k = -s t^k s^{-1} t^k = - t^{-k} t^k = -1~,
\end{equation}
we see that $-1\in N$.  Since $s^2 = stst$, we see that $tst=s$, whence conjugating $st^k$ by $t$, we find
\begin{equation}
  tst^kt^{-1} = tst t^{k-2} = s t^{k-2}~.
\end{equation}
Therefore if $s\in N$, $N$ contains all $st^{\mathrm{even}}$ ,whereas if $st \in N$, $N$ contains all $st^{\mathrm{odd}}$.  Similarly,
\begin{equation}
  st^k st^l = -t^{-k}t^l = - t^{l -k }~,
\end{equation}
whence $N$ contains all the $t^{\mathrm{even}}$.  If $n=2p+1$ is odd, so that $t^{2p+1} = -1$, $t = -t^{-2p}$ also belongs to $N$ and hence $N=\sD_{n+2}$ is not a proper subgroup.  If $n=2p$ is even, then we have two index-2 normal subgroups: the normal subgroup generated by $s$ and the normal subgroup generated by $st$.  Both of these normal subgroups are related by an outer automorphism which sends $(s,t)$ to $(st,t)$.  In terms of quaternions, it is conjugation by $e^{i\pi/2n}$.  In summary, we have that $\sD_{2p+2}$ has (up to equivalence) two proper normal subgroups of index 2: $\sD_{p+2}=\left<s,t^2\right>$ and $\sA_{4p-1} = \left<t\right>$.

\subsection{Binary tetrahedral group}
\label{sec:binary-tetr-group}

It is easy to determine the normal subgroups of $\sE_6$ from the knowledge of the conjugacy classes.  These are tabulated in Table~\ref{tab:classes2T}, which is borrowed from \cite{deMedeiros:2009pp}.   In the table, $s=\frac{(1+i)(1+j)}{2}$ and $t=\frac{(1+j)(1+i)}{2}$.

\begin{table}[h!]
  \centering
  \caption{Conjugacy classes of $\sE_6$}
  \begin{tabular}{*{8}{>{$}c<{$}|}}
    \text{Class} & 1 & -1 & s & t & t^2 & s^2 & st \\
    \text{Size} & 1 & 1 & 4 & 4 & 4 & 4 & 6 \\
    \text{Order} & 1 & 2 & 6 & 6 & 3 & 3 & 4
  \end{tabular}
  \label{tab:classes2T}
\end{table}

A normal subgroup is made out of conjugacy classes and its order must divide the order of the group, in this case $24$.  There are 8 divisors of $24$: $1,2,3,4,6,8,12,24$, the first and last correspond to the improper normal subgroups.  Every subgroup contains the identity element, so the class $\{1\}$ has to be present.  There is precisely one union of conjugacy classes of order $2$, namely the centre $\sA_1 = \{\pm 1\}$ which is clearly normal.  There is no way to get a normal subgroup of orders $3$ or $4$.   This means that all proper normal subgroups must have even order and hence, in particular, they all contain $-1$.  Now, $s$ and $-s$ are in different conjugacy classes, since $-s$ is conjugate to $t^2$ and similarly, $t$ and $-t$ are in different conjugacy classes, since $-t$ is conjugate to $s^2$.  So if a normal subgroup contains any of the classes of size $4$, by taking products, one sees it must contain at least three such classes and hence, by order, it can only be the full group.  Therefore the only other proper normal subgroup could be the one consisting of the centre and the conjugacy class of size $6$.  One can check that this is indeed a normal subgroup of index 3.  It consists of the quaternion units $\{\pm 1, \pm i, \pm j, \pm k\}$, which is the ADE subgroup $\sD_4$.  In summary, there are two proper normal subgroups of $\sE_6$: $\sA_1$ and $\sD_4$, whence their factor groups are isomorphic to $T$ and $\ZZ_3$, respectively.

\subsection{Binary octahedral group}
\label{sec:binary-octah-group}

Urged on by our success with the binary tetrahedral group, we subject the binary octahedral group to a similar analysis.  The conjugacy classes are now tabulated in Table~\ref{tab:classes2O}.    In the table, $s=\frac{(1+i)(1+j)}{2}$ and $t=\frac{1+i}{\sqrt{2}}$.

\begin{table}[h!]
  \centering
  \caption{Conjugacy classes of $\sE_7$}
  \begin{tabular}{*{9}{>{$}c<{$}|}}
    \text{Class} & 1 & -1 & s & t & s^2 & t^2 & t^3 & st \\
    \text{Size} & 1 & 1 & 8 & 6 & 8 & 6 & 6 & 12\\
    \text{Order} & 1 & 2 & 6 & 8 & 3 & 4 & 8 & 4
  \end{tabular}
  \label{tab:classes2O}
\end{table}

Let $N \lhd \sE_7$ be a proper normal subgroup.  Then it must have order equal to a proper divisor of $48$ (the order of $\sE_7$), which is one of $2,3,4,6,8,12,16,24$.  Since it is composed of conjugacy classes and the class of the identity must be included, this leaves the following possibilities: $2,8,16,24$.  The subgroup of order $2$ is again the centre $\sA_1$, whereas the normal subgroups of order $8$ must have class equation (in $\sE_7$) $1+1+6$.  The conjugacy classes of size $6$ are those of $t$, $t^2$ and $t^3$.  Clearly, if $t \in N$, then all three classes must arise.  This already means that $N$ must be the whole group.  Similarly if $t^3\in N$ arises, then also $t=-t^3\in N$, and again $N$ is not proper.  So this leaves only the class of $t^2$.  This does form a normal subgroup of index 6: namely, $\sD_4$ consisting of the quaternion units $\{\pm 1, \pm i, \pm j, \pm k\}$.  No normal subgroup of order $16$ exists, since its class equation is $1+1+6+8$, but it is easy to show that if one of the size-8 conjugacy classes belongs to $N$, then so must be other.  This is because one is the class of $s$ and the other is the class of $-s=s^2$.  This then leaves the possibility of a normal subgroup of index 2, whose class equation is $1+1+6+8+8$.  One can check that this is the ADE subgroup $\sE_6$.  In summary, the proper normal subgroups of $\sE_7$ are $\sA_1$, $\sD_4$ and $\sE_6$.  The corresponding factor groups are isomorphic to $O$, $D_6$ and $\ZZ_2$, respectively.

\subsection{Binary icosahedral group}
\label{sec:binary-icos-group}

The binary icosahedral case is simpler.  We recall that it has the abstract presentation
\begin{equation}
  \label{eq:2I}
  \sE_8 = \left<s,t \middle | s^3 = t^5= (st)^2\right>~,
\end{equation}
where $s^3= t^5 = (st)^2 = -1$ and $s=\frac{(1+i)(1+j)}{2}$ and $t=\frac{\phi + \phi^{-1}i + j}{2}$, with $\phi=\half(1+\sqrt{5})$ is the Golden Ratio.  First we can argue as follows.  The factor group of $\sE_8$ by the centre is the icosahedral group, which is a simple group isomorphic to $A_5$.  Therefore it has no proper normal subgroups.  Hence if $N\lhd \sE_8$ is a proper normal subgroup, its projection to $I$ must either be the whole group or else the identity.  The latter corresponds to $N$ being the centre $\sA_1$, whereas the former situation cannot arise.  Indeed, if there were a subgroup of $\sE_8$ isomorphic to $I$ under the projection, it could not contain $-1$, but then it must either contain $st$ or $-st$ both of which square to $-1$.

\begin{table}[h!]
  \caption{Conjugacy classes of $\sE_8$}
  \centering
  \begin{tabular}{*{10}{>{$}c<{$}|}}
    \text{Class} & 1 & -1 & t & t^2 & t^3 & t^4 & s & s^4 & st \\
    \text{Size} & 1 & 1 & 12 & 12 & 12 & 12 & 20 & 20 & 30\\
    \text{Order} & 1 & 2 & 10 & 5 & 10 & 5 & 6 & 3 & 4
  \end{tabular}
 \label{tab:classes2I}
\end{table}

Alternatively, one can examine the possible conjugacy classes, tabulated in Table~\ref{tab:classes2I}, and check that the only proper divisor of $120=|\sE_8|$ which can be obtained by adding sizes of conjugacy classes including the class of the identity, is $2$.

In summary, the only proper normal subgroup of $\sE_8$ is $\sA_1$, with factor group isomorphic to $I$.

\subsection{Summary}
\label{sec:summary}

Table~\ref{tab:normal} summarises the proper normal subgroups of the ADE subgroups of $\Sp(1)$ up to conjugation in $\Sp(1)$ and including the isomorphism type of the corresponding factor group.  Of course, to this table one should add the improper normal subgroups: namely, $\{\one\}$ and the whole group itself.  It is important to stress the fact that whereas $N$ and $G$ are subgroups of $\Sp(1)$, $F$ is an abstract group obtained as a factor group.  It is for this reason that even if $F$ happens to be abstractly isomorphic to one of the ADE subgroups of $\Sp(1)$ we do not use the notation $\sA$, $\sD$, $\sE$, which we reserve for the finite subgroups of $\Sp(1)$ in Table~\ref{tab:ADE}.  We remark that in either of the two cases $\sA_{2k-1} \lhd \sD_{kl+2}$ and $\sA_{2k} \lhd \sD_{l(2k+1)+2}$, the integer $l$ can take the value $1$, in which cases $D_2 \cong \ZZ_2$ and $2D_2 \cong \ZZ_4$.

\begin{table}[h!]
  \caption{Normal (proper) subgroups of ADE subgroups}
  \centering
  \begin{tabular}[t]{>{$}l<{$}|>{$}l<{$}}
    N\lhd G & G/N\\\hline
    \sA_{k-1} \lhd \sA_{kl-1} & \ZZ_l\\
    \sA_{2k-1} \lhd \sD_{kl+2} & D_{2l}\\
    \sA_{2k} \lhd \sD_{l(2k+1)+2} & 2D_{2l}\\
    \sD_{k+2} \lhd \sD_{2k+2} & \ZZ_2\\
    \sA_1 \lhd \sE_6 & T
\end{tabular}
 \qquad\qquad
  \begin{tabular}[t]{>{$}l<{$}|>{$}l<{$}}
    N\lhd G & G/N\\\hline
   \sD_4 \lhd \sE_6 & \ZZ_3\\
    \sA_1 \lhd \sE_7 & O\\
    \sD_4 \lhd \sE_7 & D_6\\
    \sE_6 \lhd \sE_7 & \ZZ_2\\
    \sA_1 \lhd \sE_8 & I
 \end{tabular}
 \label{tab:normal}
\end{table}

\section{Subgroups of $\Spin(4)$}
\label{sec:subgroups-spin4}

In this section we list the finite subgroups of $\Spin(4)$ (up to automorphisms of $\Spin(4)$) obtained via Goursat's theory.  Recall that the data determining such a subgroup is a pair $A$, $B$ of ADE subgroups of $\Sp(1)$, normal subgroups $A_0 \lhd A$ and $B_0\lhd B$ with isomorphic factor groups $A/A_0 \cong B/B_0 \cong F$ and an automorphism $\tau$ representing a class in $\Twist(F)$.  Therefore we need to determine which pairs of ADE subgroups have isomorphic factor groups and the group of outer automorphisms of all possible factor groups.

\subsection{Compatible ADE subgroups}
\label{sec:comp-ade-subgr}

Let us say that two ADE subgroups $A$ and $B$ are \emph{compatible} if they admit isomorphic factor groups.   Compatibility is clearly an equivalence relation, and we can read off the equivalence classes, indexed by the factor group $F$, from Table~\ref{tab:normal}.  This is displayed in Table~\ref{tab:compatible}.  In the line corresponding to the factor group $\ZZ_2$, the notation is such that $\sD_{k+2}$ has normal subgroup $\sA_{2k-1}$, whereas $\sD'_{2k+2}$ is $\sD_{2k+2}$ but with normal subgroup $\sD_{k+2}$.  Except for this ambiguity, the factor group determines (the equivalence class of) the normal subgroup uniquely.

\begin{table}[h!]
  \caption{Compatible ADE subgroups}
  \centering
  \begin{tabular}[t]{>{$}l<{$}|>{$}l<{$}}
    F & G\\\hline
    \{1\} & \text{all}\\
    \ZZ_2 & \sA_{2k-1}, \sD'_{2k+2}, \sD_{k+2},\sE_7 \\
    \ZZ_3 & \sA_{3k-1}, \sE_6\\
    \ZZ_4 & \sA_{4k-1}, \sD_{(2k+1)+2}\\
    \ZZ_{l\geq 5} & \sA_{kl -1} \\
    D_6 & \sD_{3k+2}, \sE_7\\
    D_{2l\neq 2,6} & \sD_{kl+2}\\
 \end{tabular}
  \qquad\qquad
  \begin{tabular}[t]{>{$}l<{$}|>{$}l<{$}}
    F & G\\\hline
    2D_{2l} & \sD_{l(2k+1)+2}\\
    T & \sE_6\\
    O & \sE_7\\
    I & \sE_8\\
    2T & \sE_6\\
    2O & \sE_7\\
    2I & \sE_8
  \end{tabular}
  \label{tab:compatible}
\end{table}

\subsection{Outer automorphisms of some finite groups}
\label{sec:outer-autom-some}

It remains to determine the group of outer automorphisms of the factor groups.

\subsubsection{Outer automorphisms of $\ZZ_n$}
\label{sec:outer-autom-zz_n}

Since $\ZZ_n$ is abelian, every automorphism is outer.  Let $x$ be a generator of $\ZZ_n$, so that $x$ has order $n$.  Then under an automorphism, $x \mapsto x^r$, where $x^r$ too has order $n$, whence $r$ is coprime to $n$.  In other words, $r$ defines an element in the multiplicative group $\ZZ_n^\times$ of units in $\ZZ_n$.  This group has order $\phi(n)$, the value of Euler's totient function.

\subsubsection{Outer automorphisms of $D_{2n}$}
\label{sec:outer-autom-d_2n}

Let $D_{2n}$ denote the group with presentation and enumeration
\begin{equation}
  \label{eq:dihedral}
  D_{2n} = \left<x,y \middle| x^2 = y^n = (x y)^2 = 1\right> = \left\{y^p\middle | 0\leq p < n\right\} \cup \left\{x y^p\middle | 0\leq p < n\right\}~.
\end{equation}
For $n> 2$, the automorphism group $\Aut(D_{2n})$ of $D_{2n}$ is the affine group on the rotational subgroup $\ZZ_n$: namely, $\ZZ_n^\times \ltimes \ZZ_n$.  Its action on the group elements is the following.  Let $(a,b) \in \ZZ_n^\times \ltimes \ZZ_n$ and let $\tau_{a,b}$ denote the corresponding automorphism.  Then
\begin{equation}
  \tau_{a,b} (y^p) = y^{ap}  \qquad\text{and}\qquad  \tau_{a,b}(x y^p) = x y^{ap + b}~.
\end{equation}
On the other hand, the group of inner automorphisms is the subgroup of $\Aut(D_{2n})$ generated by conjugation by the generators: $x$ and $y$:
\begin{equation}
  \begin{aligned}[m]
    x y^p x^{-1} &= y^{-p}\\
    x^2 y^p x^{-1} & =  x y^{-p}
  \end{aligned}
\qquad\qquad
  \begin{aligned}[m]
    y y^p y^{-1} &= y^p\\
    y x y^p y^{-1} & = x y^{p-2}~.
  \end{aligned}
\end{equation}
In other words, the subgroup of inner automorphisms is generated by $\tau_{-1,0}$ and $\tau_{1,2}$.  We must distinguish between $n$ even or odd.  If $n$ is odd, then all the translations in $\Aut(D_{2n})$ are inner, whereas if $n$ is even only the even translations are inner.  Thus, if $n$ is odd, the subgroup of inner automorphisms is isomorphic to $\ZZ_2 \ltimes \ZZ_n$ and hence the group $\Out(D_{2n})$ of outer automorphisms is isomorphic to the factor group $\ZZ_n^\times/\left<-1\right>$ of $\ZZ_{n}^\times$ by the order-2 subgroup generated by $-1$.  On the other hand, if $n$ is even then the subgroup of inner automorphisms is isomorphic to $\ZZ_2 \ltimes \ZZ_{n/2}$, whence $\Out(D_{2n})$ is isomorphic to the direct product of the factor group $\ZZ_n^\times/\left<-1\right>$ and the order-2 group of translations modulo even translations.

If $n=2$, then $D_4 \cong \ZZ_2 \times \ZZ_2$ is  the Klein \emph{Viergruppe}, whose automorphism group permutes all the elements of order $2$, whence $\Aut(D_4) \cong S_3 \cong D_6$, and since $D_4$ is abelian, $\Out(D_4) \cong D_6$, as well.

\subsubsection{Outer automorphisms of $2D_{2n}$}
\label{sec:outer-autom-2d_2n}

This is very similar to the previous case.  The group $2D_{2n}$ now admits the presentation and enumeration
\begin{equation}
  2D_{2n} = \left<s,t \middle| s^2 = t^n = (st)^2 \right> = \left\{t^p\middle | 0\leq p < 2n\right\} \cup \left\{st^p\middle | 0\leq p < 2n\right\}~.
\end{equation}
For $n>2$, the group $\Aut(2D_{2n})$ of automorphisms is again an affine group $\ZZ_{2n}^\times \ltimes \ZZ_{2n}$ with $(a,b) \in \ZZ_{2n}^\times \ltimes \ZZ_{2n}$ acting via $\tau_{a,b}$ defined by
\begin{equation}
  \tau_{a,b} (t^p) = t^{ap}  \qquad\text{and}\qquad  \tau_{a,b}(s t^p) = s t^{ap + b}~.
\end{equation}
The subgroup of inner automorphisms is generated by $\tau_{-1,0}$ and $\tau_{1,2}$, and is thus isomorphic to $\ZZ_2 \ltimes \ZZ_n$.  The group $\Out(2D_{2n})$ of outer automorphisms is the direct product $\ZZ_{2n}^\times/\left<-1\right> \times \ZZ_2$ where the $\ZZ_2$ factor is the order-2 group of translations modulo even translations.

Again, we need to distinguish the case $n=2$.  In this case, the automorphism group of $2D_4$ is isomorphic to the octahedral group $O$ and the group of outer automorphisms is isomorphic to $D_6$.  We can see how to describe the automorphisms in a very concrete way by using the ADE subgroups of $\Sp(1)$.  Indeed, $2D_4$ is isomorphic to the ADE subgroup $\sD_4$ and this sits as a normal subgroup of $\sE_7$.  The action of $\sE_7$ on $\sD_4$ is via automorphisms.  The centre acts trivially, so the image of $\sE_7$ in $\Aut(2D_4)$ is isomorphic to $\sE_7/\sA_1 \cong O$.  The group of outer automorphisms is isomorphic to the quotient $\sE_7/\sD_4 \cong D_6$.

Finally, we consider the case $n=1$.  Now $2D_2 \cong \ZZ_4$ and the automorphism group is isomorphic to $\ZZ_4^\times \cong \ZZ_2$.

\subsubsection{Outer automorphisms of $T$}
\label{sec:outer-autom-t}

The tetrahedral group $T$ is isomorphic to the alternating group $A_4$.  The automorphism group is $S_4$ and since $A_4$ has no centre, the group of inner automorphisms is isomorphic to $A_4$, whence the outer automorphism group is isomorphic to $\ZZ_2$.  In terms of the presentation
\begin{equation}
  T = \left<x,y | x^3 = y^3 = (x y)^2 = 1\right>~,
\end{equation}
a representative for the generator of the group of outer automorphisms is the automorphism which swaps $x$ and $y$.

\subsubsection{Outer automorphisms of $O$}
\label{sec:outer-autom-o}

The octahedral group is isomorphic to the symmetric group $S_4$, whence its automorphism group is $S_4$.  Since the centre is trivial, the group of inner automorphisms is also isomorphic to $S_4$, whence the group of outer automorphisms is trivial.

\subsubsection{Outer automorphisms of $I$}
\label{sec:outer-autom-i}

The icosahedral group is isomorphic to the alternating group $A_5$, which is a simple group.   The automorphism group is $S_5$ and the subgroup of inner automorphisms is isomorphic to $A_5$, whence the group of outer automorphisms is isomorphic to $\ZZ_2$.  In terms of 
permutations, $A_5$ is generated by $s=(142)$ and $t=(12345)$ and the outer automorphisms are generated by conjugation by an odd permutation, e.g., $(35)$, which leaves $s$ invariant and sends $t$ to $(12543)$.

\subsubsection{Outer automorphisms of $2T$}
\label{sec:outer-autom-2t}

This case was treated, for example, in \cite[§7.3]{deMedeiros:2009pp}.  The group of outer automorphisms isomorphic to $\ZZ_2$, and it is generated by the automorphism which exchanges the generators $\frac{(1+i)(1+j)}{2}$ and $\frac{(1+j)(1+i)}{2}$ in Table~\ref{tab:ADE}.  However this automorphism is obtained by conjugation with $\frac{i+j}{\sqrt{2}}$ in $\Sp(1)$.  As shown in \cite{deMedeiros:2009pp}, this means that the quotient involving the nontrivial outer automorphism is equivalent to the one involving the identity automorphism.

\subsubsection{Outer automorphisms of $2O$}
\label{sec:outer-autom-2o}

This case was treated, for example, in \cite[§7.4]{deMedeiros:2009pp}.  The automorphism group $\Aut(2O)\cong O \times \ZZ_2$, where $O$ corresponds to the inner automorphisms, whence $\Out(2O) \cong \ZZ_2$, whose generator is represented by the automorphism which fixes the first generator $\frac{(1+i)(1+j)}{2}$ in Table~\ref{tab:ADE}, and changes the sign of the second generator $\frac{1+i}{\sqrt{2}}$.

\subsubsection{Outer automorphisms of $2I$}
\label{sec:outer-autom-2i}

Finally, this was treated in \cite[§7.5]{deMedeiros:2009pp}.  The group of outer automorphisms is again $\ZZ_2$, whose generator is represented by the automorphism which leaves the first generator $\frac{(1+i)(1+j)}{2}$ in Table~\ref{tab:ADE} alone and sends the second generator $\frac{\phi + \phi^{-1}i + j}{2}$ to $\frac{-\phi^{-1} - \phi i + k}{2}$.

Table~\ref{tab:outer} summarises the above considerations.

\begin{table}[h!]
  \caption{Outer automorphisms of factor groups}
  \centering
  \begin{tabular}[t]{>{$}l<{$}|>{$}l<{$}}
    F & \Out(F)\\\hline
    \ZZ_l & \ZZ_l^\times\\
    D_{4l>4} & \ZZ_{2l}^\times/\left<-1\right> \times \ZZ_2\\
    D_{4l + 2} & \ZZ_{2l+1}^\times/\left<-1\right>\\
    D_4 & D_6\\
    2D_{2l>4} &  \ZZ_{2l}^\times/\left<-1\right> \times \ZZ_2\\
    2D_{4} &  D_6
\end{tabular}
  \qquad\qquad
  \begin{tabular}[t]{>{$}l<{$}|>{$}l<{$}}
    F & \Out(F)\\\hline
    T & \ZZ_2\\
    O & \left\{1\right\}\\
    I & \ZZ_2\\
    2T & \ZZ_2\\
    2O & \ZZ_2\\
    2I & \ZZ_2
  \end{tabular}
  \label{tab:outer}
\end{table}

\subsection{The finite subgroups of $\Spin(4)$}
\label{sec:finite-subgr-spin4}

Let $A$ and $B$ be ADE subgroups of $\Sp(1)$.  Then as described above, all finite subgroups of $\Spin(4)$ are fibred products $A\times_{(F,\tau)} B$, where $A,B$ admit factor groups isomorphic to $F$ and $\tau \in \Out(F)$ is an outer automorphism of $F$.

As a special case of this construction we have those subgroups where $F=\{1\}$ (whence $\tau = 1$), which correspond to the direct product $A \times B$ and tabulated in Table~\ref{tab:products}, where we allow $\sA_n$ to include the trivial group $\sA_0$ as a special case.  All these give rise to orbifolds with $\eN=4$.

\begin{table}[h!]
  \caption{Product subgroups of $\Spin(4)$}
  \centering
  \begin{tabular}[t]{>{$}l<{$}|>{$}l<{$}}
    \multicolumn{1}{c}{$\Gamma$}& \text{Order}\\\hline
    \sA_{n-1} \times \sA_{m-1} & nm\\
    \sA_{n-1} \times \sD_{m+2} & 4nm\\
    \sA_{n-1} \times \sE_6 & 24n\\
    \sA_{n-1} \times \sE_7 & 48n\\
    \sA_{n-1} \times \sE_8 & 120n\\
    \sD_{n+2} \times \sD_{m+2} & 16nm\\
    \sD_{n+2} \times \sE_6 & 96n\\
    \sD_{n+2} \times \sE_7 & 192n
  \end{tabular}
  \qquad\qquad
  \begin{tabular}[t]{>{$}l<{$}|>{$}l<{$}}
    \multicolumn{1}{c}{$\Gamma$} & \text{Order}\\\hline
    \sD_{n+2} \times \sE_8 & 480n\\
    \sE_6 \times \sE_6 & 576\\
    \sE_6 \times \sE_7 & 1152 \\
    \sE_6 \times \sE_8 & 2880\\
    \sE_7 \times \sE_7 & 2304\\
    \sE_7 \times \sE_8 & 5760\\
    \sE_8 \times \sE_8 & 14400
  \end{tabular}
  \label{tab:products}
\end{table}

Another special case of this construction is where $A=B=F$.  Such groups are abstractly isomorphic to $A$, but the embedding in $\Spin(4)$ is as the graph of an (outer) automorphism.  These are precisely the subgroups leading to the smooth quotients classified in \cite{deMedeiros:2009pp} and tabulated in Table~\ref{tab:smooth}, which is borrowed from \cite{deMedeiros:2009pp}.  The cases $\sA_{\leq 3}$, $\sA_5$, $\sD_4$, $\sD_5$ and $\sE_6$ have no $\eN=4$ quotients because they either have no nontrivial outer automorphisms or else the nontrivial outer automorphisms give rise to equivalent quotients to the case of trivial outer automorphisms.   The automorphisms $\mu$ and $\nu$ in the $\sE_7$ and $\sE_8$ cases represent the unique nontrivial outer automorphisms of those groups.  If no automorphism is shown we take the diagonal subgroup consisting of the graph of the identity.

\begin{table}[h!]
  \caption{Smooth quotients with $\eN\geq 4$}
  \centering
  \begin{tabular}{>{$}l<{$}|>{$}l<{$}|>{$}l<{$}|>{$}c<{$}}
    A = B = F & \tau \in \Twist(A) & |A \times_{(A,\tau)} A| & \eN \\\hline
    \sA_1 & 1 & 2 & 8\\
    \sA_{n-1\geq 2} & 1 & n & 6 \\
    \sD_{n+2\geq 4} & 1 & 4n & 5 \\
    \sE_6 & 1 & 24 & 5 \\
    \sE_7 & 1 & 48 & 5 \\
    \sE_8 & 1 & 120 & 5 \\
    \sA_{n-1\geq 4} & 1\neq r \in \ZZ^\times_n/\left<-1\right> & n& 4 \\
    \sD_{n+2\geq6} & 1\neq r \in \ZZ^\times_{2n}/\left<-1\right> & 4n & 4 \\
    \sE_7 & \mu & 48 & 4\\
    \sE_8 & \nu & 120 & 4
  \end{tabular}
  \label{tab:smooth}
\end{table}

Finally, we have the rest of the subgroups $A\times_{(F,\tau)} B$, all of which give rise to $\eN=4$ orbifolds.  We list them in Table~\ref{tab:remaining}, which contains the pair $A,B$ of ADE subgroups of $\Sp(1)$, their common factor group $F$, the set $\Twist(F)$ of inequivalent twists, and the order of $A\times_{(F,\tau)} B$, which is given by $|A||B|/|F|$ in all cases.  We recall that the set $\Twist(F)$ of inequivalent twists is defined by equation \eqref{eq:twists} and is determined in the next section.  The notation $\{1\}$ simply means that there is a unique equivalence class of possible twists and we may and will choose the identity automorphism as its representative.  Let us simply advance that in both appearances of $D_6/{\sim}$, this set is ere are only two inequivalent twists.

\begin{table}[h!]
  \caption{Remaining finite subgroups of $\Spin(4)$}
  \centering
  \begin{tabular}[t]{*{4}{>{$}l<{$}|}>{$}l<{$}}
   A & B & F & \Twist(F) & |A\times_{(F,\tau)}B| \\\hline
   \sA_{kl-1} & \sA_{ml-1} & \ZZ_l & \ZZ_l^\times/\left<-1\right> & kl m\\
   \sA_{2k-1} & \sD'_{2m+2} & \ZZ_2 & \{1\} & 8km\\
   \sA_{2k-1} & \sD_{m+2} & \ZZ_2 & \{1\} & 4km\\
   \sA_{4k-1} & \sD_{2m+3} & \ZZ_4 & \{1\} &  4k(2m+1)\\
   \sA_{3k-1} & \sE_6 & \ZZ_3 & \{1\} & 24k\\
   \sA_{2k-1} & \sE_7 & \ZZ_2 & \{1\} & 48k\\
   \sD'_{2k+2} & \sD'_{2m+2} & \ZZ_2 & \{1\} & 32km\\
   \sD_{k+2} & \sD_{m+2} & \ZZ_2 & \{1\} & 8km\\
   \sD_{2k+2} & \sD_{2m+2} & D_4 & D_6/{\sim} & 16km\\
   \sD_{lk+2} & \sD_{lm+2} & D_{2l} & \ZZ_l^\times/\left<-1\right> & 8klm,~ (l>2)\\
   \sD_{(2k+1)+2} & \sD_{(2m+1)+2}& \ZZ_4 & \{1\} & 4(2k+1)(2m+1)\\
   \sD_{2(2k+1)+2} & \sD_{2(2m+1)+2}& 2D_4 & D_6/{\sim} & 8(2k+1)(2m+1)\\
   \sD_{l(2k+1)+2} & \sD_{l(2m+1)+2}& 2D_{2l} & \ZZ_{2l}^\times/\left<-1\right> & 4l(2k+1)(2m+1),~(l>2)\\
   \sD_{k+2} & \sD'_{2m+2} & \ZZ_2 & \{1\} & 16km\\
   \sD'_{2k+2} & \sE_7 & \ZZ_2 & \{1\} & 192k\\
   \sD_{k+2} & \sE_7 & \ZZ_2 & \{1\} & 96k\\
   \sD_{3k+2} & \sE_7 & D_6 & \{1\} & 96k\\
   \sE_6 & \sE_6 & \ZZ_3 & \{1\} & 192\\
   \sE_6 & \sE_6 & T & \{1\} & 48\\
   \sE_7 & \sE_7 & \ZZ_2 & \{1\} & 1152\\
   \sE_7 & \sE_7 & D_6 & \{1\} & 384\\
   \sE_7 & \sE_7 & O & \{1\} & 96 \\
   \sE_8 & \sE_8 & I & \ZZ_2 & 240
\end{tabular}
 \label{tab:remaining}
\end{table}

\section{Explicit description of the orbifolds}
\label{sec:expl-desc-orbif}

In an effort to allow comparison with the literature and as a necessary first step in the application of this classification to the problem of identifying the superconformal field theories dual to the corresponding Freund--Rubin backgrounds, we now describe the above orbifolds more explicitly and, whenever possible, in terms of iterated quotients by cyclic groups.

\subsection{Orbifolds as iterated quotients}
\label{sec:orbi-iter-quots}

Let us start by describing iterated quotients.  Let $G$ be a finite group acting effectively and smoothly on a manifold $X$ and let $H<G$ be a subgroup.  Let $Y= X/H$ denote the quotient of $X$ by the action of $H$.  We do not assume that $Y$ is smooth, so it could well be an orbifold.  We can ask whether any elements of $G$ still act on $Y$ via their action on $X$.  Points in $Y$ are equivalence classes $[x]$ of points in $X$, where $x,x' \in X$ are equivalent if $x' = h\cdot x$, for some $h \in H$, hence an element  $g \in G$ still acts on $Y$ provided that the induced action $g \cdot [x] = [g\cdot x]$ is well defined.  This requires that if $[x]=[x']$ then $[g\cdot x] = [g\cdot x']$, which means that if $x' = h\cdot x$ for some $h \in H$, then $g \cdot x' = h' \cdot g \cdot x$ for some $h' \in H$, or equivalently, $(gh) \cdot x = (h'g) \cdot x$ for all $x \in X$.  Since $G$ acts effectively, this implies the identity $gh = h'g$ for all $h$ and some $h'$; that is, $g$ normalises $H$.  In other words, the subgroup of $G$ which acts on $Y$ is the normaliser of $H$ in $G$:
\begin{equation}
  N(H) = \left\{ g \in G \middle | g h g^{-1} \in H~\forall h \in H\right\},
\end{equation}
which is the largest subgroup of $G$ containing $H$ as a normal subgroup.  Of course, since $H$ acts trivially on $Y$, $N(H)$ does not act effectively: it is the factor group $N(H)/H$ which does.  Notice that $N(H)/H$ is \emph{not} a subgroup of $G$.

Now let $K < N(H)/H$ be a subgroup and let us consider the quotient $Y/K$.  We can write this as a quotient of $X$ by a subgroup of $G$: namely the subgroup $L < N(H)$ which maps to $K$ under the natural projection $\pi: N(H) \to N(H)/H$.  In other words, $L$ is an extension of $K$ by $H$:
\begin{equation}
  \label{eq:extensionHLK}
  \begin{CD}
    1 @>>> H @>>> L @>\pi>> K @>>> 1
  \end{CD}
\end{equation}
where the map $L \to K$ is the restriction to $L$ of $\pi$ and given the same name.

Conversely, suppose that $H\lhd L$ is a proper normal subgroup of $L<G$.  Then there is an exact sequence like equation~\eqref{eq:extensionHLK}, where $K = L/H$, and we have an equivalence between the quotient $X/L$ and the iterated quotient $(X/H)/K$.  This can be continued, of course.  Suppose that $H$, which is a subgroup of $L$ and hence of $G$, contains a normal subgroup $N$.  Then we have an exact sequence
\begin{equation}
  \label{eq:extensionNHM}
  \begin{CD}
    1 @>>> N @>>> H @>>> M @>>> 1
  \end{CD}
\end{equation}
with $M \cong H/N$.  Then we have that again
\begin{equation}
  X/H \cong (X/N)/M \implies X/L = (X/N)/(H/N)/(L/H),
\end{equation}
et cetera.

A sequence of subgroups such as $N \lhd H \lhd L$, where $N$ is normal in $H$ (though not necessarily in $L$) and $H$ is normal in $L$, is said to be a \emph{subnormal series} for $L$.  Subnormal series can have any length and more generally, a subnormal series (of length $\ell$) for a group $G$ is a sequence of subgroups
\begin{equation}
  1 = N_0 \lhd N_1 \lhd \cdots \lhd N_\ell = G
\end{equation}
where each $N_i$ is normal in $N_{i+1}$, but not necessarily in $G$.  The quotient $X/G$ is then equivalent to the iterated quotient
\begin{equation}
  (X/N_1)/(N_2/N_1)/(N_3/N_2)/\cdots/(N_\ell/N_{\ell-1})~.
\end{equation}
The groups $N_i/N_{i-1}$ are called the \emph{factors} of the subnormal series.

A group $G$ is \emph{solvable} if it has a subnormal series with abelian factors.  If $G$ is a solvable group, then so is any subgroup and any factor group.  More generally, solvable groups are closed under extension and also under taking products.  All these results are easy to prove and can be found in any book on finite groups, e.g., \cite{MR2014408}.

Now it follows from Table~\ref{tab:normal} that all the ADE subgroups are solvable, with the exception of $\sE_8$.  The relevant subnormal series are easy to determine from that table and are themselves summarised in Table~\ref{tab:subnormal}.  Notice that the subnormal series for $\sE_6$ can be read off from the one for $\sE_7$ since $\sE_6 \lhd \sE_7$.  This table also lists a choice of generators in $\Sp(1)$ for the cyclic factors arising from the subnormal series.  We find it convenient to introduce the following notation for some commonly occurring quaternions:  $\omega_n = e^{2\pi i/n}$, whence $\omega_1=1$, $\omega_2 = -1$, $\omega_4 = i$ and we use $\xi=\omega_8$, and $\zeta = e^{i\pi/4}e^{j\pi/4}$.  In terms of the quaternion generators in Table~\ref{tab:ADE}, $\xi$ corresponds to the element in $\sE_7$ which is not in the normal subgroup $\sE_6$, while $\zeta$ is the element in $\sE_{6,7,8}$ which is not in the normal subgroup $\sD_4$ of $\sE_6$.

\begin{table}[h!]
  \caption{Subnormal series for solvable ADE groups}
  \centering
  \begin{tabular}[t]{>{$}l<{$}|>{$}l<{$}|>{$}l<{$}}
    \multicolumn{1}{c|}{Subnormal series} & \multicolumn{1}{c|}{Factors} & \multicolumn{1}{c}{Generators}\\\hline
    1 \lhd \sA_{n-1} & \ZZ_n & \omega_n\\
   1 \lhd \sA_{2n-1} \lhd \sD_{n+2} & \ZZ_{2n}, \ZZ_2 & \omega_{2n}, j\\
    1 \lhd \sA_3 \lhd \sD_4 \lhd \sE_6 \lhd \sE_7 & \ZZ_4, \ZZ_2, \ZZ_3, \ZZ_2 & i, j, \zeta, \xi
 \end{tabular}
  \label{tab:subnormal}
\end{table}

The cyclic groups $\sA_{n-1}$ are abelian, hence clearly solvable.  The binary dihedral groups $\sD_{n+2}$ have a cyclic normal subgroup $\sA_{2n-1}$ with factor group $\ZZ_2$.  Unlike the dihedral group, $\sD_{n+2}$ (for $n>2$) is not a trivial extension: the only order-2 subgroup of $\sD_{n+2}$ (for $n>2$) is the centre.  The subnormal series $1\lhd \sA_{2n-1} \lhd \sD_{n+2}$ has  cyclic factors $\{\ZZ_{2n},\ZZ_2\}$, showing that $\sD_{n+2}$ is solvable.  The binary tetrahedral group $\sE_6$ has a normal subgroup $\sD_4$ with factor group $\ZZ_3$.   Therefore $1\lhd \sA_3 \lhd \sD_4 \lhd \sE_6$ is a subnormal series with cyclic factors $\{\ZZ_4,\ZZ_2,\ZZ_3\}$.  The binary octahedral group $\sE_7$ has a normal subgroup $\sE_6$ with factor group $\ZZ_2$, whence a subnormal series is obtained from the one of $\sE_6$ by extending it to the right: $1\lhd \sA_3 \lhd \sD_4 \lhd \sE_6 \lhd \sE_7$, with cyclic factors $\{\ZZ_4,\ZZ_2,\ZZ_3,\ZZ_2\}$.  Finally the only proper normal subgroup of the binary icosahedral group is its centre, whence the subnormal series is $1 \lhd \sA_1 \lhd \sE_8$ with factors $\{\ZZ_2,I\}$.  Since $I$ is nonabelian, we conclude that $\sE_8$ is not solvable.

The finite subgroups of $\Spin(4)$ are fibred products of ADE subgroups of $\Sp(1)$, whence they are subgroups of products of ADE subgroups.  Since products and subgroups of solvable groups are solvable, it follows that with the exception of those subgroups involving $\sE_8$, all other subgroups are solvable.  We will see that this means that they have a subnormal series with cyclic factor groups, and hence that the corresponding orbifolds can de described as \emph{iterated cyclic quotients}.  This will simplify their description significantly.  This technique has been used in the context of the $\AdS_5/\text{CFT}_4$ correspondence in \cite{Berenstein:2000mb}.

Recall that the fibred product $A \times_{(F,\tau)} B$ is an extension
\begin{equation}
  \label{eq:extension}
  \begin{CD}
    1 @>>> A_0 \times B_0 @>>> A \times_{(F,\tau)} B @>>> F_\tau @>>> 1,
  \end{CD}
\end{equation}
with $F_\tau = \left\{(x,\tau(x))\middle | x \in F\right\} < F \times F$ the graph of the automorphism $\tau$.  This means that orbifolding by $A \times_{(F,\tau)} B$ can be done in steps: first orbifolding by $A_0 \times B_0$ and then by $F_\tau$.  These quotients can be decomposed further by using subnormal series for $A_0$, $B_0$ and $F_\tau \cong F$.  In the cases of interest, $A_0$ and $B_0$ are ADE subgroups, whose subnormal series have been determined above.  (See Table~\ref{tab:subnormal}.)  It remains to determine subnormal series for the factor groups $F$ in Table~\ref{tab:normal}.

The factor groups $F$ consist of cyclic groups, which need not be decomposed further, dihedral and binary dihedral groups and the tetrahedral, octahedral and icosahedral groups.  This latter group is simple, whence it has no proper normal subgroups.  The dihedral group $D_{2n}$ is a semidirect product $\ZZ_n \rtimes \ZZ_2$, where $\ZZ_n$ is the normal subgroup generated by $y$ in the presentation in equation \eqref{eq:dihedral} and $\ZZ_2$ is the subgroup generated by $x$.  This gives rise to a subnormal series $1 \lhd \ZZ_n \lhd D_{2n}$ with cyclic factors $\{ \ZZ_n, \ZZ_2 \}$.  The binary dihedral group has been discussed previously, being isomorphic to $\sD_{n+2}$.  The subnormal series for the tetrahedral and octahedral groups can be read off from those of their binary cousins, by noticing that $T$ and $O$ are the factor groups obtained by quotienting $\sE_6$ and $\sE_7$, respectively, by their centre, which is contained in any normal subgroup.  Therefore we get an exact sequence $1 \to D_4 \to T \to \ZZ_3 \to 1$ from the similar one involving $\sE_6$, with $D_4 \cong \ZZ_2 \times \ZZ_2$, whereas $1 \to \sE_6 \to \sE_7 \to \ZZ_2 \to 1$ implies $1 \to T \to O \to \ZZ_2 \to 1$.  Hence we get subnormal series $1 \lhd D_4 \lhd T \lhd O$ with abelian factors $\{\ZZ_2 \times \ZZ_2, \ZZ_3,\ZZ_2\}$.  This series can be extended by inserting a $\ZZ_2$ in $1 \lhd D_4$ to obtain $1 \lhd \ZZ_2 \lhd D_4$.  Finally, we truncate at $T$ to get a subnormal series for $T$.  These observations are summarised in Table~\ref{tab:subnormalF}, where we also list a choice of generator for each cyclic factors in $F$, in terms of the generators of $F$ given by the following presentations: $\ZZ_n = \left<z\middle|z^n=1\right>$, $D_{2n} = \left<x,y\middle| x^2=y^n=(x y)^2=1\right>$, $2D_{2n} = \left<s,t\middle|s^2 = t^n = (s t)^2\right>$ and $O=\left<a,b\middle| a^3 = b^4 = (a b)^2 = 1\right>$.

\begin{table}[h!]
  \caption{Subnormal series for factor groups $F$}
  \centering
  \begin{tabular}[t]{>{$}l<{$}|>{$}l<{$}|>{$}l<{$}}
    \multicolumn{1}{c|}{Subnormal series} & \multicolumn{1}{c|}{Factors}& \multicolumn{1}{c}{Generators}\\\hline
    1 \lhd \ZZ_n & \ZZ_n& z\\
    1 \lhd \ZZ_n \lhd D_{2n}  & \ZZ_n, \ZZ_2& y,x\\
    1 \lhd \ZZ_{2n} \lhd 2D_{2n} & \ZZ_{2n},\ZZ_2&t,s\\
    1 \lhd \ZZ_2 \lhd D_4 \lhd T \lhd O & \ZZ_2, \ZZ_2, \ZZ_3, \ZZ_2&b^2, a b^2 a^{-1}, a, b
 \end{tabular}
  \label{tab:subnormalF}
\end{table}

\subsection{Orbifolds involving $\sE_8$}
\label{sec:orbif-involv-e8}

We do not have much to say about these orbifolds at this time.  Here we simply point out that the orbifolds with group $\sE_8 \times_{(I,\tau)} \sE_8$ are in fact $\ZZ_2$-orbifolds of the two smooth quotients given by the graphs of the identity automorphism of $\sE_8$, which has $\eN =5$, and the nontrivial outer automorphism of $\sE_8$, which has $\eN = 4$.  The reason is the following.  As shown in Appendix~\ref{sec:struct-fibr-prod}, the group $\Gamma = \sE_8 \times_{(I,\tau)} \sE_8$ is abstractly isomorphic to $2I \times \ZZ_2$, where the $2I$ subgroup is the graph of any automorphism $\that \in \Aut(2I)$ which induces $\tau \in \Out(I)$. Hence the orbifold $S^7/\Gamma$ is equivalent to the iterated orbifold $S^7/2I/\ZZ_2$, where the first quotient $S^7/2I$ is smooth and where the additional $\ZZ_2$ is generated by $(1,-1) \in \Sp(1) \times \Sp(1)$, say.

\subsection{Solvable orbifolds as iterated cyclic quotients}
\label{sec:solv-orbi-iter}

We will now explicitly discuss the solvable orbifolds in terms of iterated cyclic quotients.  Hence it will be sufficient to indicate the generators of the corresponding cyclic groups.  We recall the notation we use: $\omega_n = e^{2\pi i/n}$, $\xi=\omega_8$ and $\zeta = e^{i\pi/4}e^{j\pi/4}$.

\subsubsection{Orbifolds by product groups}
\label{sec:orbif-prod-groups}

Let us call a group \emph{chiral} if it is contained in one of the two $\Sp(1)$ factors of $\Spin(4)$.  There are four classes of chiral solvable subgroups of $\Sp(1)$: $\sA_{n-1}$, $\sD_{n+2}$, $\sE_6$ and $\sE_7$, and up to conjugation in $\Spin(8)$, they can be taken to belong to the second $\Sp(1)$ factor, say.  The corresponding orbifolds are written as a sequence of iterated cyclic orbifolds as follows:
\begin{itemize}
\item $S^7/\sA_{n-1}$ is the quotient $S^7/\ZZ_n$ by the cyclic group generated by $(1,\omega_n)$.
\item $S^7/\sD_{n+2}$ is the iterated quotient $S^7/\ZZ_{2n}/\ZZ_2$, where the generator of the $\ZZ_{2n}$ action is $(1,\omega_{2n})$ and the generator of the $\ZZ_2$ is $(1,j)$.  Notice that $(1,j)$ has order $4$ in $\Sp(1) \times \Sp(1)$, but it has order $2$ modulo the subgroup generated by $(1,\omega_{2n})$.
\item $S^7/\sE_6$ is equivalent to $S^7/\sD_4/\ZZ_3$, which by the previous case is $S^7/\ZZ_4/\ZZ_2/\ZZ_3$, with generators $(1,i)$, $(1,j)$ and $(1,\zeta)$ in that order.  
\item $S^7/\sE_7$ is equivalent to $S^7/\sE_6/\ZZ_2$, which by the previous case is $S^7/\ZZ_4/\ZZ_2/\ZZ_3/\ZZ_2$, with generators $(1,i)$, $(1,j)$, $(1,\zeta)$ and $(1,\xi)$.
\end{itemize}

Chiral groups are special cases of product groups, those which are the product of two chiral groups (of opposite chirality).  The product groups are listed in Table~\ref{tab:products}.  The results from the chiral groups can be used to rewrite orbifolds by product groups as iterated quotients.  As an illustration, let us consider the orbifold $S^7/(\sD_{n+2}\times \sE_6)$.  This is the iterated quotient $S^7/\ZZ_{2n}/\ZZ_2/\ZZ_4/\ZZ_2/\ZZ_3$, where the cyclic generators are, in the order written, given by $(\omega_{2n},1)$, $(j,1)$, $(1,i)$, $(1,j)$ and $(1,\zeta)$.  Of course, there is some freedom in the order in which we have written the generators.  We would obtain an equivalent orbifold if were to shuffle the generators in such a way that generators belonging to the same chiral $\Sp(1)$ are kept in the same order.  Table~\ref{tab:productsiterated} lists the orbifolds associated to product solvable subgroups of $\Spin(4)$ and expresses them as iterated cyclic orbifolds.  They all have $\eN=4$ supersymmetry.

\begin{table}[h!]
  \caption{Orbifolds by solvable product groups as iterated cyclic quotients}
  \centering
  \begin{tabular}[t]{>{$}l<{$}|>{$}l<{$}|>{$}l<{$}}
    \multicolumn{1}{c|}{$\Gamma$} & \multicolumn{1}{c|}{$S^7/\Gamma$} & \multicolumn{1}{c}{Cyclic generators}\\\hline
    \sA_{n-1} \times \sA_{m-1} & S^7/\ZZ_n/\ZZ_m & (\omega_n,1),(1,\omega_m) \\
    \sA_{n-1} \times \sD_{m+2} & S^7/(\sA_{n-1}\times\sA_{2m-1})/\ZZ_2 & (1,j)\\
    \sA_{n-1} \times \sE_6 & S^7/(\sA_{n-1}\times\sD_4)/\ZZ_3 & (1,\zeta)\\
    \sA_{n-1} \times \sE_7 & S^7/(\sA_{n-1}\times \sE_6)/\ZZ_2 & (1,\xi)\\
    \sD_{n+2} \times \sD_{m+2} & S^7/(\sA_{2n-1} \times \sD_{m+2})/\ZZ_2 & (j,1)\\
    \sD_{n+2} \times \sE_6 & S^7/(\sD_{n+2} \times \sD_4)/\ZZ_3 & (1,\zeta)\\
    \sD_{n+2} \times \sE_7 & S^7/(\sD_{n+2} \times \sE_6)/\ZZ_2 & (1,\xi)\\
    \sE_6 \times \sE_6 & S^7/(\sD_4 \times \sE_6)/\ZZ_3 & (1,\zeta)\\
    \sE_6 \times \sE_7 & S^7/(\sE_6 \times \sE_6)/\ZZ_2 & (1,\xi)\\
    \sE_7 \times \sE_7 & S^7/(\sE_6 \times \sE_7)/\ZZ_2 & (\xi,1)
  \end{tabular}
  \label{tab:productsiterated}
\end{table}

\subsubsection{Smooth quotients}
\label{sec:smooth-quotients}

We now turn our attention to the smooth quotients by solvable groups in Table~\ref{tab:smooth}.  These include all but the two quotients associated to $\sE_8$ and mentioned already in Section~\ref{sec:orbif-involv-e8}.

The smooth quotients $S^7/\sA_{n-1}\times_{(\ZZ_n,\tau)}\sA_{n-1} $ by cyclic groups are generated by the element $(\omega_n, \omega_n^q)$ in $\Sp(1) \times \Sp(1)$, where $q$ is coprime to $n$.  Because the subgroups corresponding to $q$ and $-q$ are conjugate in $\Sp(1)\times\Sp(1)$ --- they are related by conjugation by $(1,j)$ --- the corresponding orbifolds are equivalent.  For $q = \pm 1$, the quotient has $\eN=8$ for $n=2$, $\eN = 6$ for $n>2$, whereas for any other value of $q$, it has $\eN=4$.

The smooth quotients $S^7/\sD_{n+2} \times_{(2D_{2n},\tau)} \sD_{n+2}$ by binary dihedral groups can be described as an iterated quotient by cyclic groups: $S^7/\ZZ_{2n}/\ZZ_2$, where the action of $\ZZ_{2n}$ is generated by the element $(\omega_{2n},\omega_{2n}^r)$, where $r$ is coprime to $2n$, and the action of $\ZZ_2$ is generated by $(j,j)$.  Again we identify $r$ with $-r$, since they are conjugate by $(1,j)$ in $\Sp(1)\times \Sp(1)$.  If $r=\pm 1$ the quotient has $\eN=5$ and it is a further $\ZZ_2$ quotient of the smooth $\eN=6$ lens space $S^7/\ZZ_{2n}$, otherwise the quotient has $\eN=4$ and it is a further $\ZZ_2$ quotient of the smooth $\eN=4$ quotient by $\ZZ_{2n}$.  When $n=2$, the outer automorphism group of $2D_4$ is $D_6$, but as discussed in Section~\ref{sec:outer-autom-2d_2n}, all automorphisms are induced by conjugation inside an $\sE_7$ subgroup of $\Sp(1)$ and hence the associated quotients are equivalent to the $\eN=5$ quotient corresponding to the identity automorphism.  In other words, $\Twist(2D_4) \cong \{1\}$ in this case.  We remind the reader that the notation $\Twist(F)$ hides the fact that it is not just a function of the abstract group $F$, but depends on the way that $F$ can be obtained as a factor group of $A$ and $B$.

The smooth $\eN=5$ quotient $S^7/\sE_6 \times_{2T} \sE_6$ is equivalent to $S^7/\ZZ_4/\ZZ_2/\ZZ_3$, where the action of $\ZZ_4$ is generated by the element $(i,i)$, that of $\ZZ_2$ by $(j,j)$, and the one of $\ZZ_3$ by $(\zeta,\zeta)$.  Hence $S^7/2T$ can be understood as the $\ZZ_3$ quotient  of the $\eN=5$ quotient $S^7/\sD_4$ or as a $D_6$ quotient of the $\eN=6$ lens space $S^7/\ZZ_4$.  This latter quotient can be done in two steps: the first a quotient by $\ZZ_2$ and the second by $\ZZ_3$, as indicated.

Finally, the quotient $S^7/\sE_7 \times_{(2O,\tau)}\sE_7$ is equivalent to $S^7/\ZZ_4/\ZZ_2/\ZZ_3/\ZZ_2$, where the action of $\ZZ_4$ is generated by the element $(i,i)$, the first $\ZZ_2$ action by $(j,j)$, the $\ZZ_3$ action by $(\zeta,\zeta)$ and the final $\ZZ_2$ action by $(\xi,\pm\xi)$, where the signs $\pm$ correspond to the two quotients: the $+$ sign for the diagonal $\eN=5$ quotient and the $-$ sign for the twisted $\eN=4$ quotient.

Table~\ref{tab:smoothiterated} summarises the smooth quotients by solvable groups, expressed as iterated cyclic quotients. In that table, $q \in \ZZ_n^\times/\left<-1\right>$ and $r \in \ZZ_{2n}^\times/\left<-1\right>$.

\begin{table}[h!]
  \caption{Smooth quotients by solvable groups as iterated cyclic quotients}
  \centering
  \begin{tabular}[t]{>{$}l<{$}|>{$}l<{$}|>{$}l<{$}}
    \multicolumn{1}{c|}{$\Gamma$} & \multicolumn{1}{c|}{$S^7/\Gamma$} & \multicolumn{1}{c}{Cyclic generators}\\\hline
    \sA_{n-1}\times_{(\ZZ_n,\tau)}\sA_{n-1} & S^7/\ZZ_n & (\omega_n,\omega_n^q)\\
    \sD_{n+2} \times_{(2D_{2n},\tau)} \sD_{n+2}& S^7/\ZZ_{2n}/\ZZ_2 & (\omega_{2n},\omega_{2n}^r), (j,j)\\
    \sE_6 \times_{2T} \sE_6 & S^7/\ZZ_4/\ZZ_2/\ZZ_3 & (i,i),(j,j),(\zeta,\zeta) \\
    \sE_7 \times_{(2O,\tau)}\sE_7 & S^7/\ZZ_4/\ZZ_2/\ZZ_3/\ZZ_2 & (\zeta\xi,\zeta\xi),(i,i),(\zeta,\zeta),(\xi,\pm\xi)
  \end{tabular}
  \label{tab:smoothiterated}
\end{table}

\subsubsection{Remaining solvable orbifolds}
\label{sec:rema-solv-orbif}

It remains to discuss those orbifolds in Table~\ref{tab:remaining} where the group $A\times_{(F,\tau)}B$ is solvable; that is, all cases but the order-240 groups $\sE_8 \times_{(I,\tau)} \sE_8$.  As discussed above, the group $A\times_{(F,\tau)}B$ can be written as an extension \eqref{eq:extension}, in which case the corresponding orbifold is $S^7/(A_0 \times B_0)/F_\tau$, where $F_\tau$ may decompose further into cyclic groups as summarised in Table \ref{tab:subnormalF}.

Let us first treat those cases where $F_\tau$ is already a cyclic group.  Then a possible generator for the action of $F_\tau$ is $(x,\tau(x)) \in \Gamma$, where $x$ is read off from the cyclic generator of $F$ in Table~\ref{tab:subnormal}.

\begin{itemize}
\item $\Gamma = \sA_{kl-1}\times_{(\ZZ_l,\tau)} \sA_{ml-1}$\\
Here $\tau$ is the automorphism corresponding to multiplication by $r \in \ZZ_l^\times$.  The extension takes the form
\begin{equation}
  \begin{CD}
    1 @>>> \sA_{k-1} \times \sA_{m-1} @>>>\Gamma @>>> (\ZZ_l)_\tau @>>> 1,
  \end{CD}
\end{equation}
where $(\ZZ_l)_\tau$ is the subgroup of $\ZZ_l \times \ZZ_l$ consisting of pairs $(a,b)$ with $b = ra \pmod l$ and where the maps $\sA_{k-1} \to \sA_{kl-1}$ and  $\sA_{m-1} \to \sA_{ml-1}$ consist of taking the $l$th power of the generator.  We may describe this orbifold as an iterated quotient in the following way: $S^7/\ZZ_k/\ZZ_m/\ZZ_l$, with cyclic generators $(\omega_k,1)$, $(1,\omega_m)$ and $(\omega_{kl},\omega_{ml}^r)$.  The special case of $r=1$ corresponds to the orbifolds discussed in \cite[§3.4]{Imamura:2008nn} and \cite[§2]{Imamura:2008ji} and to be discussed in Section~\ref{sec:conclusion}.   Notice that the automorphisms labelled by $r$ and $-r$ are actually related by conjugation by the element $(1,j)$ in $\Sp(1)\times \Sp(1)$, whence the two orbifolds are actually equivalent.

\item $\Gamma = \sA_{2k-1} \times_{\ZZ_2} \sD'_{2m+2}$\\
This group is the extension
\begin{equation}
  \begin{CD}
    1 @>>> \sA_{k-1} \times \sD_{m+2} @>>>\Gamma @>>> \ZZ_2 @>>> 1.
  \end{CD}
\end{equation}
Therefore, the orbifold $S^7/\Gamma$ can be described by $S^7/(\sA_{k-1} \times \sD_{m+2})/\ZZ_2$, where the first quotient is by a product group and hence already discussed in Section \ref{sec:orbif-prod-groups}, and the $\ZZ_2$ action is generated by the element $(\omega_{2k},\omega_{4m})$ which has order 2 mod $\sA_{k-1} \times \sD_{m+2}$.

\item $\Gamma = \sA_{2k-1} \times_{\ZZ_2} \sD_{m+2}$\\
This group is the extension
\begin{equation}
  \begin{CD}
    1 @>>> \sA_{k-1} \times \sA_{2m-1} @>>>\Gamma @>>> \ZZ_2 @>>> 1.
  \end{CD}
\end{equation}
Therefore, the orbifold $S^7/\Gamma$ can be described by $S^7/(\sA_{k-1} \times \sA_{2m-1})/\ZZ_2$, where the first quotient is by a product group and hence already discussed in Section \ref{sec:orbif-prod-groups}, and the $\ZZ_2$ action is generated by the element $(\omega_{2k},j)$ which has order 2 mod $\sA_{k-1} \times \sA_{2m-1}$.

\item $\Gamma = \sA_{4k-1} \times_{(\ZZ_4,\tau)} \sD_{2m+3}$\\
This group is the extension
\begin{equation}
  \begin{CD}
    1 @>>> \sA_{k-1} \times \sA_{2m} @>>>\Gamma @>>> (\ZZ_4)_\tau @>>> 1,
  \end{CD}
\end{equation}
where $\tau \in \ZZ_4^\times$.  Therefore, the orbifold $S^7/\Gamma$ can be described by $S^7/(\sA_{k-1} \times \sA_{2m})/\ZZ_4$, where the first quotient is by a product group, and the $\ZZ_4$ action is generated by the element $(\omega_{4k},\pm j)$, where the choice of sign is the choice of $\tau \in \ZZ_4^\times$.  Both choices are related by conjugation by $(1,i)$, whence they define equivalent twists.

\item $\Gamma = \sA_{3k-1} \times_{(\ZZ_3,\tau)} \sE_6$\\
Here $\tau \in \ZZ_3^\times$ is an automorphism which will manifest itself in a choice of sign.   This group is an extension
\begin{equation}
  \begin{CD}
    1 @>>> \sA_{k-1} \times \sD_4 @>>>\Gamma @>>> \ZZ_3 @>>> 1,
  \end{CD}
\end{equation}
whence the orbifold $S^7/\Gamma$ is equivalent to $S^7/(\sA_{k-1} \times \sD_4)/\ZZ_3$, where the first orbifold is by a product group, hence already discussed, and the action of the $\ZZ_3$ is generated by the element $(\omega_{3k}^{\pm 1}, \zeta)$, where the $+$ sign corresponds to the trivial automorphism and the $-$ sign to the nontrivial automorphism of $\ZZ_3$.  The two signs are related by conjugation by the element $(j,1)\in \Sp(1)\times\Sp(1)$, which normalises the subgroup $\sA_{k-1} \times \sD_4$, whence both orbifolds are equivalent.

\item $\Gamma = \sA_{2k-1} \times_{\ZZ_2} \sE_7$\\
This group is the extension
\begin{equation}
  \begin{CD}
    1 @>>> \sA_{k-1} \times \sE_6 @>>>\Gamma @>>> \ZZ_2 @>>> 1,
  \end{CD}
\end{equation}
whence the orbifold $S^7/\Gamma$ can be described as $S^7/(\sA_{k-1} \times \sE_6)/\ZZ_2$, where the first orbifold is by a product group and the action of the $\ZZ_2$ is generated by the element $(\omega_{2k},\xi)$.

\item $\Gamma = \sD'_{2k+2} \times_{\ZZ_2} \sD'_{2m+2}$\\
This group is the extension
\begin{equation}
  \begin{CD}
    1 @>>> \sD_{k+2} \times \sD_{m+2} @>>>\Gamma @>>> \ZZ_2 @>>> 1,
  \end{CD}
\end{equation}
whence the orbifold $S^7/\Gamma$ can be described as $S^7/(\sD_{k+2} \times \sD_{m+2})/\ZZ_2$, where the first orbifold is by a product group and the action of $\ZZ_2$ is generated by the element $(\omega_{4k},\omega_{4m})$.

\item $\Gamma = \sD_{k+2} \times_{\ZZ_2} \sD_{m+2}$\\
This group is the extension
\begin{equation}
  \begin{CD}
    1 @>>> \sA_{2k-1} \times \sA_{2m-1} @>>>\Gamma @>>> \ZZ_2 @>>> 1,
  \end{CD}
\end{equation}
whence the orbifold $S^7/\Gamma$ can be described as $S^7/(\sA_{2k-1} \times \sA_{2m-1})/\ZZ_2$, where the action of $\ZZ_2$ is generated by the element $(j,j)$.

\item $\Gamma = \sD_{(2k+1)+2} \times_{(\ZZ_4,\tau)} \sD_{(2m+1)+2}$\\
This group is the extension
\begin{equation}
  \begin{CD}
    1 @>>> \sA_{2k} \times \sA_{2m} @>>>\Gamma @>>> (\ZZ_4)_\tau @>>> 1,
  \end{CD}
\end{equation}
whence the orbifold $S^7/\Gamma$ can be described as $S^7/(\sA_{2k} \times \sA_{2m})/\ZZ_4$, where the action of $\ZZ_4$ is generated by the element $(j,\pm j)$, where the choice of sign corresponds to the choice of $\tau \in \ZZ_4^\times$.  Clearly both choices are conjugate via $(1,i)$ and hence give rise to equivalent twists.

\item $\Gamma = \sD_{k+2} \times_{\ZZ_2} \sD'_{2m+2}$\\
This group is the extension
\begin{equation}
  \begin{CD}
    1 @>>> \sA_{2k-1} \times \sD_{m+2} @>>>\Gamma @>>> \ZZ_2 @>>> 1,
  \end{CD}
\end{equation}
whence the orbifold $S^7/\Gamma$ can be described as $S^7/(\sA_{2k} \times \sD_{m+2})/\ZZ_2$, where the action of $\ZZ_2$ is generated by the element $(j,\omega_{4m})$.

\item $\Gamma = \sD_{k+2} \times_{\ZZ_2} \sE_7$\\
This group is the extension
\begin{equation}
  \begin{CD}
    1 @>>> \sA_{2k-1} \times \sE_6 @>>>\Gamma @>>> \ZZ_2 @>>> 1,
  \end{CD}
\end{equation}
whence the orbifold $S^7/\Gamma$ can be described as $S^7/(\sA_{2k-1} \times \sE_6)/\ZZ_2$, where the first orbifold is by a product group and the action of the $\ZZ_2$ is generated by the element $(j,\xi)$.

\item $\Gamma = \sD'_{2k+2} \times_{\ZZ_2} \sE_7$\\
This group is the extension
\begin{equation}
  \begin{CD}
    1 @>>> \sD_{k+2} \times \sE_6 @>>>\Gamma @>>> \ZZ_2 @>>> 1,
  \end{CD}
\end{equation}
whence the orbifold $S^7/\Gamma$ can be described as $S^7/(\sD_{k+2} \times \sE_6)/\ZZ_2$, where the first orbifold is by a product group and the action of the $\ZZ_2$ is generated by the element $(\omega_{4k},\xi)$.

\item $\Gamma = \sE_6 \times_{(\ZZ_3,\tau)} \sE_6$\\
This group is the extension
\begin{equation}
  \begin{CD}
    1 @>>> \sD_4 \times \sD_4 @>>>\Gamma @>>> \ZZ_3 @>>> 1,
  \end{CD}
\end{equation}
whence $S^7/\Gamma$ is given by the iterated orbifold $S^7/(\sD_4 \times \sD_4)/\ZZ_3$, where the first quotient is by a product group, and the action of the $\ZZ_3$ is given by $(\zeta,\zeta)$ or, if $\tau$ is nontrivial, $(\zeta,\tau(\zeta))$, where $\tau(\zeta) = e^{j\pi/4}e^{i\pi/4}$.   Both choices are conjugate in $\Sp(1) \times \Sp(1)$ by the element $(1,\zeta\xi)$, whence we are free to take $\tau=1$.

\item $\Gamma = \sE_7 \times_{\ZZ_2} \sE_7$\\
This group is the extension
\begin{equation}
  \begin{CD}
    1 @>>> \sE_6 \times \sE_6 @>>>\Gamma @>>> \ZZ_2 @>>> 1,
  \end{CD}
\end{equation}
whence $S^7/\Gamma$ is a $\ZZ_2$ quotient of $S^7/\sE_6 \times \sE_6$, where the $\ZZ_2$-action is generated by $(\xi,\xi)$.
\end{itemize}

Finally we consider the more complicated cases where $F_\tau$ is not cyclic and must be decomposed further.  Here we need to find elements in $\Gamma$ which project down to the generators of the cyclic groups into which we decompose $F_\tau$.  In all cases, this can be done using the information in Table~\ref{tab:subnormalF}.

\begin{itemize}
\item $\Gamma =\sD_{2kl+2} \times_{(D_{4l},\tau)} \sD_{2ml+2}$ ($l>1$) and $\Gamma = \sD_{k(2l+1)+2} \times_{(D_{4l+2},\tau)} \sD_{m(2l+1)+2}$ ($l>0$)\\
These two groups are special cases of $\Gamma = \sD_{kl+2} \times_{(D_{2l},\tau)} \sD_{ml+2}$, which is the extension
\begin{equation}
  \begin{CD}
    1 @>>> \sA_{2k-1} \times \sA_{2m-1} @>>>\Gamma @>>> D_{2l} @>>> 1,
  \end{CD}
\end{equation}
whence the orbifold $S^7/\Gamma$ can be described as $S^7/(\sA_{2k-1} \times \sA_{2m-1})/\ZZ_l/\ZZ_2$, where the first orbifold is by a product group, the action of $\ZZ_l$ is generated by the element $(\omega_{2kl},\omega_{2ml}^r)$, for $r \in \ZZ_l^\times/\left<-1\right>$, and the action of $\ZZ_2$ is generated by $(j,j)$.

\item $\Gamma =\sD_{2k+2} \times_{(D_4,\tau)} \sD_{2m+2}$\\
This group is the extension
\begin{equation}
  \begin{CD}
    1 @>>> \sA_{2k-1} \times \sA_{2m-1} @>>>\Gamma @>>> (D_4)_\tau @>>> 1,
  \end{CD}
\end{equation}
where $\tau \in \Out(D_4) \cong D_6$.  One way to think about $\tau$ is an identification (consistent with the group multiplication) between the cosets of $\sA_{2k-1}$ in $\sD_{2k+2}$ and those of $\sA_{2m-1}$ in $\sD_{2m+2}$.  Since $\sD_{2k+2}= \left<j,\omega_{4k}\right>$ and $\sA_{2k-1} = \left<\omega_{2k}\right>$, the elements of $\sD_{2k+2}/\sA_{2k-1}$ are given by the cosets $[1]$, $[\omega_{4k}]$, $[j]$ and $[j\omega_{4k}]$.  Since $-1\in [1]$, all non-identity elements of $\sD_{2k+2}/\sA_{2k-1}$ have order 2.  Therefore any bijection between the two sets $\{[\omega_{4k}],[j],[j\omega_{4k}] \}$ and $\{[\omega_{4m}],[j],[j\omega_{4m}] \}$ gives a possible $\tau$.  Conjugation by $\omega_{8k}$ in $\Sp(1)$ defines an outer automorphism of $\sD_{2k+2}$ of order 2 which exchanges $[j] \leftrightarrow [j\omega_{4k}]$ and similarly for $k$ replaced by $m$.  The set of inequivalent twists is the set of orbits of $\Out(F) \cong D_6$ under this action of $\ZZ_2 \times \ZZ_2$, and it is not hard to see that this set consists of two elements: the identity and the automorphism which exchanges $[j]$ and $[\omega_{4k}]$ in $\sD_{2k+2}/\sA_{2k-1}$.  In summary, there are two in equivalent orbifolds $S^7/\Gamma$: one equivalent to $S^7/(\sA_{2k-1} \times \sA_{2m-1})/\ZZ_2/\ZZ_2$, where the generators of the $\ZZ_2$-actions are $(j,j)$ and $(\omega_{4k},\omega_{4m})$, respectively; and the other also equivalent to $S^7/(\sA_{2k-1} \times \sA_{2m-1})/\ZZ_2/\ZZ_2$, but where the generators of the $\ZZ_2$-actions are now $(j,\omega_{4m})$ and $(\omega_{4k},j)$.

\item $\Gamma = \sD_{l(2k+1)+2} \times_{(2D_{2l},\tau)} \sD_{l(2m+1)+2}$ ($l>2$)\\
This group is the extension
\begin{equation}
  \begin{CD}
    1 @>>> \sA_{2k} \times \sA_{2m} @>>>\Gamma @>>> 2D_{2l} @>>> 1,
  \end{CD}
\end{equation}
whence the orbifold $S^7/\Gamma$ is equivalent to $S^7/(\sA_{2k} \times \sA_{2m})/\ZZ_{2l}/\ZZ_2$, where the $\ZZ_{2l}$-action is generated by $(\omega_{2l(2k+1)},\omega_{2l(2m+1)}^r)$, for $r \in \ZZ_{2l}^\times/\left<-1\right>$, and the action of $\ZZ_2$ is generated by $(j,j)$.

\item $\Gamma = \sD_{2(2k+1)+2} \times_{(2D_4,\tau)} \sD_{2(2m+1)+2}$\\
This case is very similar to that of $\sD_{2k+2} \times_{(D_4,\tau)} \sD_{2m+2}$.  The extension now is
\begin{equation}
  \begin{CD}
    1 @>>> \sA_{2k} \times \sA_{2m} @>>>\Gamma @>>> (2D_4)_\tau @>>> 1,
  \end{CD}
\end{equation}
where $\tau \in \Out(2D_4)\cong D_6$ again.  The group $2D_4$ is the quaternion group, so abstractly isomorphic to $\sD_4$.  The details of the possible twists are \emph{mutatis mutandis} like in the case of  $\sD_{2k+2} \times_{(D_4,\tau)} \sD_{2m+2}$, except that since $-1\not [1]$ we need to take signs into account.  The signs take care of themselves, however, and the upshot is that there are again two inequivalent twists and hence two inequivalent quotients $S^7/\Gamma$, equivalent to $S^7/ (\sA_{2k} \times \sA_{2m})/\ZZ_4/\ZZ_2$, where in one case the generators of $\ZZ_4$ and $\ZZ_2$ are given, respectively, by $(j,j)$ and $(\omega_{4(2k+1)},\omega_{4(2m+1)})$, and in the other case by $(j,\omega_{4(2m+1)})$ and $(\omega_{4(2m+1)}, j)$.

\item $\Gamma = \sD_{3k+2} \times_{(D_6,\tau)} \sE_7$\\
This group is the extension
\begin{equation}
  \begin{CD}
    1 @>>> \sA_{2k-1} \times \sD_4 @>>>\Gamma @>>> D_6 @>>> 1,
  \end{CD}
\end{equation}
whence the orbifold $S^7/\Gamma$ can be described as $S^7/(\sA_{2k-1} \times \sD_4)/\ZZ_3/\ZZ_2$, where the first orbifold is by a product group and the actions of $\ZZ_3$ and $\ZZ_2$ are generated, respectively, by the elements $(\omega_{6k},\zeta)$ and $(j,\xi)$.

\item $\Gamma = \sE_6 \times_{(T,\tau)} \sE_6$\\
The automorphism $\tau$ can be taken to be the identity without loss of generality, since the nontrivial outer automorphism of $T$ is induced by conjugation in $\Sp(1)$ (see, e.g., \cite{deMedeiros:2009pp}).  As discussed in Appendix~\ref{sec:struct-fibr-prod}, this is a $\ZZ_2$ orbifold of the $\eN=5$ smooth quotient by $\sE_6$, described in \ref{sec:smooth-quotients}, where the action of $\ZZ_2$ group is generated by the element $(1,-1)$.  Alternatively, $\Gamma$ is the extension
\begin{equation}
  \begin{CD}
    1 @>>> \sA_1 \times \sA_1 @>>>\Gamma @>>> T @>>> 1,
  \end{CD}
\end{equation}
whence $S^7/\Gamma$ is equivalent to $S^7/(\sA_1 \times \sA_1)/\ZZ_2/\ZZ_2/\ZZ_3$, where the actions of the last three cyclic groups are generated by the elements $(i,i)$, $(j,j)$ and $(\zeta,\zeta)$.

\item $\Gamma = \sE_7 \times_{O} \sE_7$\\
As discussed in Appendix~\ref{sec:struct-fibr-prod}, this is again a $\ZZ_2$ orbifold of the $\eN=5$ smooth quotient by $\sE_7$, described in \ref{sec:smooth-quotients}, where the action of $\ZZ_2$ group is generated by the element $(1,-1)$.  Alternatively, it is a further $\ZZ_2$ quotient of the previous case.  Indeed, $\Gamma$ is the extension
\begin{equation}
  \begin{CD}
    1 @>>> \sA_1 \times \sA_1 @>>>\Gamma @>>> O @>>> 1,
  \end{CD}
\end{equation}
whence we could as well describe the orbifold as $S^7/(\sA_1 \times \sA_1)/\ZZ_2/\ZZ_2/\ZZ_3/\ZZ_2$, where the actions of the last four cyclic groups are generated by the elements $(i,i)$, $(j,j)$, $(\zeta,\zeta)$ and $(\xi,\xi)$.

\item $\Gamma = \sE_7 \times_{D_6} \sE_7$\\
This is simply a further $\ZZ_2$ orbifold of the previous case.  Indeed, the group is an extension
\begin{equation}
  \begin{CD}
    1 @>>> \sD_4 \times \sD_4 @>>>\Gamma @>>> D_6 @>>> 1,
  \end{CD}
\end{equation}
whence $S^7/\Gamma$ is a $\ZZ_2$ quotient of $S^7/(\sE_6 \times_{\ZZ_3} \sE_6)$, where the $\ZZ_2$ action is generated by $(\xi, \xi)$.

\end{itemize}

Table~\ref{tab:remainingiterated} summarises these results.  For lack of space, we only list cyclic generators corresponding to the quotient by $F$, since the ones corresponding to the quotient by the product group can be read off from Table~\ref{tab:productsiterated}.  We recall the definitions $\omega_n = e^{2\pi i/n}$, $\xi = \omega_8$ and $\zeta = e^{i\pi/4}e^{j\pi/4}$.  In the first and tenth lines, $q \in \ZZ_l^\times/\left<-1\right>$, i.e., an integer modulo $l$ and coprime to $l$ and where $q$ and $-q$ are identified, while in the thirteenth line $r \in \ZZ_{2l}^\times/\left<-1\right>$.

\begin{table}[h!]
  \caption{Remaining orbifolds by solvable groups as iterated cyclic quotients}
  \centering
  \begin{tabular}[t]{>{$}l<{$}|>{$}l<{$}|>{$}l<{$}}
    \multicolumn{1}{c|}{$\Gamma = A \times_{(F,\tau)} B$} & \multicolumn{1}{c|}{$S^7/\Gamma = S^7/(A_0 \times B_0)/F$} & \multicolumn{1}{c}{Cyclic generators for $F$}\\\hline
    \sA_{kl-1} \times_{(\ZZ_l,\tau)} \sA_{ml-1} & S^7/(\sA_{k-1}\times \sA_{m-1})/\ZZ_l & (\omega_{kl},\omega_{ml}^q)\\
    \sA_{2k-1} \times_{\ZZ_2} \sD'_{2l+2} & S^7/(\sA_{k-1} \times \sD_{l+2})/\ZZ_2 & (\omega_{2k},\omega_{4l})\\
    \sA_{2k-1} \times_{\ZZ_2} \sD_{m+2} & S^7/(\sA_{k-1} \times \sA_{2m-1})/\ZZ_2 & (\omega_{2k},j)\\
    \sA_{4k-1} \times_{\ZZ_4} \sD_{2m+3} & S^7/(\sA_{k-1} \times \sA_{2m})/\ZZ_4 & (\omega_{4k},j)\\
    \sA_{3k-1} \times_{\ZZ_3} \sE_6 & S^7/(\sA_{k-1} \times \sD_4)/\ZZ_3 & (\omega_{3k},\zeta)\\
    \sA_{2k-1} \times_{\ZZ_2} \sE_7 & S^7/(\sA_{k-1}\times \sE_6)/\ZZ_2 & (\omega_{2k},\xi)\\
    \sD'_{2k+2} \times_{\ZZ_2} \sD'_{2l+2} & S^7/(\sD_{k+2}\times\sD_{l+2})/\ZZ_2 & (\omega_{4k},\omega_{4l})\\
    \sD_{k+2} \times_{\ZZ_2} \sD_{m+2} & S^7/(\sA_{2k-1} \times \sA_{2m-1})/\ZZ_2 & (j,j)\\
    \sD_{2k+2} \times_{(D_4,\tau)} \sD_{2m+2} & S^7/(\sA_{2k-1}\times\sA_{2m-1})/\ZZ_2/\ZZ_2& \begin{cases} (j,j),(\omega_{4k},\omega_{4m})\\ (j,\omega_{4m}),(\omega_{4k},j)\end{cases}\\
   \sD_{kl+2} \times_{(D_{2l},\tau)} \sD_{ml+2} & S^7/(\sD_{k+2} \times \sD_{m+2}) /\ZZ_l/\ZZ_2 & (\omega_{2kl},\omega_{2ml}^q),(j,j)\\
    \sD_{(2k+1)+2} \times_{\ZZ_4} \sD_{(2m+1)+2} & S^7/(\sA_{2k} \times \sA_{2m})/\ZZ_4 & (j,j) \\
   \sD_{2(2k+1)+2} \times_{(2D_4,\tau)} \sD_{2(2m+1)+2} & S^7/(\sA_{2k}\times\sA_{2m})/\ZZ_4/\ZZ_2& \begin{cases} (j,j), (\omega_{4(2k+1)},\omega_{4(2m+1)})\\ (j,\omega_{4(2m+1)}), (\omega_{4(2m+1)}, j)\end{cases} \\
   \sD_{l(2k+1)+2} \times_{(2D_{2l},\tau)} \sD_{l(2m+1)+2} & S^7/(\sA_{2k}\times\sA_{2m})/\ZZ_{2l}/\ZZ_2 &  (\omega_{2l(2k+1)},\omega_{2l(2m+1)}^r),(j,j)\\
    \sD_{k+2} \times_{\ZZ_2} \sD'_{2m+2} & S^7/(\sA_{2k-1} \times \sD_{m+2})/\ZZ_2 & (\omega_{2k},\omega_{4m})\\
   \sD'_{2k+2} \times_{\ZZ_2} \sE_7 &  S^7/(\sD_{k+2}\times \sE_6)/\ZZ_2 & (\omega_{4k},\xi)\\
    \sD_{k+2} \times_{\ZZ_2} \sE_7 & S^7/(\sA_{2k-1} \times \sE_6)/\ZZ_2 & (j,\xi)\\
    \sD_{3k+2} \times_{D_6} \sE_7 & S^7/(\sA_{2k-1}\times \sD_4)/\ZZ_3/\ZZ_2 & (\omega_{6k},\zeta), (j,\xi)\\
    \sE_6 \times_{\ZZ_3} \sE_6 & S^7/(\sD_4 \times \sD_4)/\ZZ_3 & (\zeta,\zeta)\\
    \sE_6 \times_{T} \sE_6 & S^7/(\sA_1 \times \sA_1)/\ZZ_2/\ZZ_2/\ZZ_3 & (i,i),(j,j),(\zeta,\zeta)\\
    \sE_7 \times_{\ZZ_2} \sE_7 & S^7/(\sE_6 \times \sE_6)/\ZZ_2 & (\xi,\xi)\\
    \sE_7 \times_{D_6} \sE_7 & S^7/(\sD_4 \times \sD_4)/\ZZ_3/\ZZ_2 & (\zeta,\zeta),(\xi,\xi)\\
    \sE_7 \times_{O} \sE_7 & S^7/(\sA_1 \times \sA_1)/\ZZ_2/\ZZ_2/\ZZ_3/\ZZ_2 & (i,i),(j,j),(\zeta,\zeta),(\xi,\xi)
 \end{tabular}
  \label{tab:remainingiterated}
\end{table}

\section{Conclusion and outlook}
\label{sec:conclusion}

By way of conclusion, we shall use the results from the previous section to identify a few of the $\eN \geq 4$ orbifolds classified in this paper which have been encountered already in the recent M2-brane literature --- the intention being to thereby motivate the investigation of M2-brane interpretations for the many new $\eN = 4$ orbifolds we have found. All the known examples in this context have been obtained from moduli spaces for certain $\eN \geq 4$ superconformal field theories in three dimensions. As established in \cite{SCCS3Algs}, all such theories necessarily involve a non-dynamical gauge field described by a Chern--Simons term coupled to matter fields which fall into hypermultiplet representations of the $\eN = 4$ conformal superalgebra in three dimensions. Typically the superconformal moduli spaces of gauge-inequivalent vacua in theories of this type have a rather complicated structure though they invariably contain a particular branch involving constant matter fields which parametrise the transverse space to the M2-branes, on which the superconformal field theories furnish a low-energy effective description. Following the standard recipe for holographic duality in the $\AdS_4/\text{CFT}_3$ context, in the strong coupling limit, one identifies this branch in the $\text{CFT}_3$ moduli space with the metric cone $\RR^8/\Gamma$ over the quotient $S^7/\Gamma$ appearing in the dual $\AdS_4 \times S^7/\Gamma$ Freund-Rubin background.

The isometry group of $S^7$ is $\SO(8)$, whence in particular any subgroup $\Gamma < \SO(8)$ acts linearly on $\RR^8$ and we can consider the orbifold $\RR^8/\Gamma$.  Since linear transformations fix the origin, this is always an orbifold even if $S^7/\Gamma$ is smooth. One can of course think of $\RR^8$ here as either $\CC^4$ or $\HH^2$. In terms of matter hypermultiplets, the quaternionic perspective is more natural since it makes the $\eN = 4$ structure manifest and indeed the finite subgroups of $\Sp(1) \times \Sp(1)$ act naturally on $\HH \oplus \HH$ via left multiplication. In the existing literature however it is often preferred to describe moduli spaces in complex notation, by thinking of each matter hypermultiplet as consisting of a pair of chiral supermultiplets.  The action of $\Gamma$ on $\CC^4$ is generally not complex-linear though this notation does make sense for the $\eN\geq 4$ orbifolds discussed here where the action is in fact $\HH$-linear and thus $\CC$-linear. The translation between the quaternionic notation we use and the complex notation in the rest of the literature is as follows.

The action of the finite subgroups of $\Sp(1) \times \Sp(1)$ on $\HH \oplus \HH$ via left multiplication is $\HH$-linear provided that scalar multiplication acts on $\HH \oplus \HH$ on the right.  Associativity of quaternion multiplication guarantees that the action of $\Sp(1) \times \Sp(1)$ commutes with scalar multiplication. We give $\HH \oplus \HH$ the structure of a complex vector space by restricting scalars from $\HH$ to $\CC$.  This sets an isomorphism $\HH \cong \CC^2$ and hence $\HH \oplus \HH \cong \CC^4$ as follows.  It is clearly enough to identify $\HH$ with $\CC^2$.  To do this we identify the quaternion $x = z + j w \in \HH$, where $z,w \in \CC$, with the vector $(z,w) \in \CC^2$.  This identification is complex linear: for all $\lambda \in \CC$, we have that
\begin{equation}
  \lambda x := (z+jw)\lambda = \lambda z + j \lambda w \rightsquigarrow (\lambda z, \lambda w) = \lambda(z,w)~.
\end{equation}
Now let $u = a + bj \in \Sp(1)$, so that $a,b\in\CC$ and $|a|^2 + |b|^2 = 1$.  Then $ux =(a + bj)(z+jw) = (az - bw) + j(\bbar z + \abar w)$, whence
\begin{equation}
  a+ bj \mapsto
  \begin{pmatrix}
    a & -b \\ \bbar & \hphantom{-}\abar
  \end{pmatrix}.
\end{equation}
Let us now conclude by illustrating this discussion by recovering all the known examples of $\eN \geq 4$ quotients in the M2-brane literature. They are either smooth with $\eN >4$ or cyclic orbifolds with $\eN =4$.   

\subsection{Smooth quotients}
\label{sec:smooth-quotients-2}

Consider first the $\ZZ_k$ subgroup of $\Sp(1) \times \Sp(1)$ generated by $g=(\omega_k, \omega_k^r)$, where $\omega_k = e^{2\pi i/k}$ is a primitive $k$th root of unity and $r$ is some integer coprime to $k$.  Then if $(x,y) \in \HH \oplus \HH$, we have that $g \cdot (x,y)  = (e^{2\pi i/k}x, e^{2\pi ir/k}y)$.  Writing $x = z_1 + j z_2$ and $y = z_3 + j z_4$, we find that the action of $g$ on $\CC^4$ is given by
\begin{equation}
 g\cdot (z_1,z_2,z_3,z_4) = (\omega_k z_1, \omega_k^{-1} z_2, \omega_k^r z_3, \omega_k^{-r} z_4)~.
\end{equation}
The orbifold $\CC^4 / \ZZ_k$ corresponds to the cone over the smooth cyclic quotient in the first row of Table~\ref{tab:smoothiterated}. Generically it has $\eN = 4$, with $\eN =6$ only if $r= \pm 1$ and $\eN =8$ only if $r= \pm1$ and $k=1$ or $k=2$. The untwisted case with $r= \pm 1$ arises as the dual geometry for the class of $\eN =6$ superconformal field theories in \cite{MaldacenaBL} with the Chern--Simons level identified with $k$. The other twisted smooth $\eN =4$ cyclic quotients with $r \neq \pm 1$ have no known superconformal field theory duals.  

Next, begin by considering again the example above but for the $\ZZ_{2k}$ subgroup of $\Sp(1) \times \Sp(1)$ generated by $g=(\omega_{2k}, \omega_{2k}^r)$, with $r$ now coprime to $2k$.  The subgroup $2 D_{2k}$ of $\Sp(1) \times \Sp(1)$ is generated by $g$ and $h = (j,j)$. The action of $g$ and $h$ on $\CC^4$ is given by
\begin{equation}
  \begin{aligned}[m]
  g\cdot (z_1,z_2,z_3,z_4) &= (\omega_{2k} z_1, \omega_{2k}^{-1} z_2, \omega_{2k}^r z_3, \omega_{2k}^{-r} z_4) \\
  h\cdot (z_1,z_2,z_3,z_4) &= (-z_2,z_1,-z_4,z_3)~. 
  \end{aligned}~
\end{equation}
The orbifold $\CC^4 / 2 D_{2k}$ corresponds to the cone over the smooth binary dihedral quotient in the second row of Table~\ref{tab:smoothiterated} obtained by quotienting first by the action of $g \in \ZZ_{2k}$ and then by $h \in \ZZ_2$. Generically it has $\eN = 4$ with $\eN =5$ only if $r= \pm 1$. The untwisted case with $r= \pm 1$ arises as the dual geometry for the class of $\eN =5$ superconformal field theories in \cite{3Lee,ABJ}. The extra quotient by $\ZZ_2$ in the dual geometry here has a direct interpretation from orientifolding  an $\eN =6$ theory in \cite{MaldacenaBL} with even Chern--Simons level $2k$ to obtain this class of $\eN =5$ theories. The twisted smooth $\eN =4$ binary dihedral quotients with $r \neq \pm 1$ have no known superconformal field theory duals. Neither do any of the smooth binary polyhedral quotients, be they untwisted with $\eN =5$ or twisted with $\eN =4$. 
  
\subsection{Cyclic orbifolds}
\label{sec:cyclic-orbifolds}

Consider first the product subgroup $\ZZ_p \times \ZZ_q$ of $\Sp(1) \times \Sp(1)$, with typical element $g=(\omega_p^a, \omega_q^b)$ for any $0 \leq a < p$ and $0 \leq b < q$.  Its action on $\CC^4$ is given by
\begin{equation}
  g \cdot (z_1,z_2,z_3,z_4) = (\omega_p^a z_1, \omega_p^{-a} z_2, \omega_q^b z_3, \omega_q^{-b} z_4)~.
\end{equation}
The orbifold $\CC^4 / \ZZ_p \times \ZZ_q$ corresponds to the cone over the product orbifold in the first row of Table~\ref{tab:productsiterated} (from which the chiral cyclic orbifold in Section~\ref{sec:orbif-prod-groups} follows as a special case if either $p$ or $q$ equal one). Orbifolds of this type were considered in \cite{Terashima:2008ba} for which a dual $\eN =4$ superconformal field theory was proposed as \lq orbifold gauge theory II' in the chiral case.  

The fibred product subgroup $\ZZ_{pk} \times_{\ZZ_k} \ZZ_{qk}$ of $\Sp(1) \times \Sp(1)$ is generated by multiplying the elements in $\ZZ_p \times \ZZ_q$ above with $g=(\omega_{pk}, \omega_{qk}^r)$, where $r$ is some integer coprime to $k$. The action of this extra generator on $\CC^4$ is given by
\begin{equation}
  g\cdot (z_1,z_2,z_3,z_4) = (\omega_{pk} z_1, \omega_{pk}^{-1} z_2, \omega_{qk}^r z_3, \omega_{qk}^{-r} z_4)~.
\end{equation}
The orbifold $\CC^4 / \ZZ_{pk} \times_{\ZZ_k} \ZZ_{qk}$ corresponds to the cone over the orbifold in the first row of Table~\ref{tab:remainingiterated} (from which the product orbifold above follows as a special case if $k=1$). In the untwisted case with $r= \pm 1$, orbifolds of this type have been obtained in \cite{MasahitoBL,KlebanovBL,Terashima:2008ba,Imamura:2008nn,Imamura:2008ji,Imamura:2009ur,Imamura:2009ph}. 

More precisely, the ones obtained in \cite{KlebanovBL,Terashima:2008ba} correspond to a special case of this type of orbifold wherein both $p$ and $q$ equal some fixed positive integer $n$ which is assumed to be coprime to $k$ (the one in \cite{MasahitoBL} corresponds to the case where $n=2$
\footnote{In this reference, a non-generic branch of the moduli space is also found which takes the form $\RR^8 / D_{2m}$ in terms of the ordinary dihedral group $D_{2m} = \ZZ_m \rtimes \ZZ_2$, though this orbifold preserves only $\eN =3$ supersymmetry.}
). In this special case, $\ZZ_n \times \ZZ_k \cong \ZZ_{nk}$ and the results in Appendix~\ref{sec:struct-fibr-prod} imply that $\ZZ_{nk} \times_{\ZZ_k} \ZZ_{nk} \cong \ZZ_{nk} \times \ZZ_n$ which is how the quotient is referred to in these references. The $\eN =4$ superconformal field theories proposed to be the holographic duals of such orbifolds are referred to as the \lq non-chiral orbifold gauge theory' in \cite{KlebanovBL} and \lq orbifold gauge theory I' in \cite{Terashima:2008ba} and correspond to a special case of the class of $\eN =4$ superconformal field theories first obtained in \cite{pre3Lee} such that the gauge group consists of a product of an even number of unitary groups. The gauge-matter couplings for these theories are encoded by quiver diagrams consisting of an even number of $2n$ nodes connected by $2n$ matter hypermultiplets to form a chain. The Chern--Simons level for each node is $\pm k$ with the sign alternating from node to node as one traverses the chain. Matter hypermultiplets must also alternate between twisted and untwisted type, so that there are $n$ of each type (see \cite{pre3Lee,SCCS3Algs} for more details). It is worth remarking that $S^7 / \ZZ_{nk} \times_{\ZZ_k} \ZZ_{nk}$ can be written as either $( S^7 / \ZZ_{n} \times \ZZ_{n} ) / \ZZ_k$ or $S^7 / \ZZ_{nk} / \ZZ_n$. In the latter form, the extra quotient by $\ZZ_n$ in the dual geometry has a direct interpretation from orbifolding an $\eN =6$ theory in \cite{MaldacenaBL} with Chern--Simons level $nk$, which is how this class of $\eN =4$ theories was obtained in \cite{KlebanovBL}. 

The general orbifolds of this type were obtained in \cite{Imamura:2008nn,Imamura:2008ji,Imamura:2009ur,Imamura:2009ph} and the class of $\eN =4$ superconformal field theories proposed as their holographic duals are referred to as \lq elliptic models'. Like the theories in \cite{KlebanovBL,Terashima:2008ba}, these elliptic models also have gauge-matter couplings encoded by chain quivers only now the number of nodes is given by $p+q$, where $p$ and $q$ correspond respectively to the numbers of, say, untwisted and twisted hypermultiplets forming the links in the chain. In particular, $p$ need not equal $q$ here and so the number of nodes need not be even. The Chern--Simons level for each node in the quiver is either $\pm k$ or zero with a zero occurring at each node whose pair of connecting links are hypermultiplets without a relative twist.

In the twisted case with $r \neq \pm 1$, none of the cyclic orbifolds have known $\eN =4$ superconformal field theory duals, nor indeed, to the best of our knowledge, do any of the remaining nonabelian $\eN =4$ orbifolds in Table~\ref{tab:remainingiterated}. It is hoped that progress in this direction may be aided by our description of these new orbifolds in terms of iterated quotients in order to perhaps obtain dual $\eN =4$ superconformal field theories for some of them via a projection of the known theories (e.g. via further orbifolding or orientifolding), perhaps along the lines discussed in \cite{Berenstein:2009ay}.

\subsection{M5-brane orbifolds}
\label{sec:m5-branes}

Let us mention briefly that a similar, but much simpler situation obtains with M5-branes, as discussed, for example, in \cite[§5.2]{AFHS}.

In this case, we are interested in supersymmetric backgrounds $\AdS_7 \times X^4$, with $X$ possibly an orbifold. Bär's construction, together with the non-existence of irreducible five-dimensional holonomy representations, imply that the only (complete) four-dimensional manifold admitting real Killing spinors is the round sphere $S^4$, hence any other supersymmetric background must be an orbifold of $S^4$ by a finite subgroup of $\SO(5)$, lifting isometrically to a subgroup $\Gamma < \Sp(2)$.  The space of Killing spinors is again identified with the $\Gamma$-invariant parallel spinors on $\RR^5$ which is the irreducible spinor representation $\Delta$ of $\Sp(2)$.  This representation is quaternionic, whence the space of $\Gamma$-invariant spinors is a quaternionic subspace: if a spinor is invariant, so is its quaternion line, by the quaternion-linearity of the action of $\Sp(2)$.  Since $\dim_\HH\Delta =2$, necessarily $0\leq \dim_\HH\Delta^\Gamma \leq 2$.  Hence if we demand some supersymmetry, either $\Gamma = \{1\}$ and we have $X = S^4$, or else the orbifold is half-BPS.  In this case $\Gamma$ is contained in an $\Sp(1)$ subgroup of $\Sp(2)$, leaving a nonzero vector invariant in the fundamental representation of $\Sp(2)$.  Up to automorphisms, we see that $\Gamma$ is one of the ADE subgroups in Table \ref{tab:ADE}, but this time embedded in $\Sp(2)$ in such a way that if $u \in \Gamma < \Sp(1)$ and $(x,y)\in\HH^2$, then $u \cdot (x,y) = (ux,y)$.

The action of that $\Sp(1)$ subgroup of $\SO(5)$ on $S^4$ is given by restricting the action on $\RR^5$.  This is given as follows.  First of all, there is a vector which is fixed, call it $\bv$.  If we identify the four-dimensional subspace perpendicular to $\bv$ with $\HH$, then the action of $\Sp(1)$ is by left quaternion multiplication.  Finally, using the arguments described in Section \ref{sec:orbi-iter-quots} and in particular Table~\ref{tab:subnormal}, it is a simple exercise to decompose such orbifolds $S^4/\Gamma$, except for $\Gamma = \sE_8$, into a sequence of cyclic quotients.  This should become useful if and when we understand the six-dimensional superconformal field theory dual to $\AdS_7 \times S^4$.

\section*{Acknowledgments}

This work was supported in part by grant ST/G000514/1 ``String Theory Scotland'' from the UK Science and Technology Facilities Council.  We are grateful to Chris Smyth for correspondence and to the MathOverflow community and, in particular, Keith Conrad, Robin Chapman and Theo Johnson-Freyd for their generous help in answering questions related to the research in this paper.

\appendix

\section{Structure of fibred products}
\label{sec:struct-fibr-prod}

In this appendix we show that even though the finite subgroups $A\times_{(F,\tau)} B$ of $\Spin(4)$ corresponding to different $\tau \in \Twist(F)$ are generally not conjugate in $\Spin(4)$, they are (in almost most cases) \emph{abstractly} isomorphic as groups.  The independence on the automorphism is of course trivially true in those cases where $\Twist(F)$ is a singleton, but we will now see that this is the case also in those cases where $\Twist(F) \neq \{1\}$, with two possible exceptions.   In all cases where we can show this, the result follows from Lemma~\ref{lem:autolift} below after exhibiting lifts of every $\tau \in \Twist(F)$ to $\Aut(A)$ or $\Aut(B)$ and in many cases the lift follows from Lemma~\ref{lem:cycliclift} below.  This does not seem to be totally trivial, in that outer automorphisms generally do not lift and, furthermore, there are examples of fibred products with inequivalent twisting automorphisms which are not abstractly isomorphic.  This appendix owes a lot to the collective wisdom of the MathOverflow community and in particular to the answers provided by the users mentioned in the acknowledgments to some of the questions asked by the senior author.

We start with two preliminary results.

\begin{lemma}
  \label{lem:autolift}
  If $\tau \in \Twist(F)$ is induced from an automorphism of either $A$ or $B$, then $A \times_{(F,\tau)} B \cong A \times_{(F,\id)} B$.
\end{lemma}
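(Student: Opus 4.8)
The plan is to realise the desired isomorphism as the restriction to the fibred products of a particularly simple automorphism of the ambient direct product $A \times B$, namely one of the form $\sigma \times \id_B$ (when $\tau$ comes from $A$) or $\id_A \times \rho$ (when $\tau$ comes from $B$), where $\sigma \in \Aut(A)$ and $\rho \in \Aut(B)$ are lifts of $\tau$. Since these maps are automorphisms of $A\times B$, the only real content is checking that they carry one fibred product onto the other.

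First I would unpack the hypothesis. Saying that $\tau \in \Twist(F)$ is induced from an automorphism of $A$ means there is $\sigma \in \Aut(A)$ with $\sigma(A_0) = A_0$ — this condition is forced, since only then does $\sigma$ descend to $F = A/A_0$ — whose induced automorphism $\hat\sigma := i_A \circ \sigma_* \circ i_A^{-1} \in \Aut(F)$ equals $\tau$. Because $\sigma \mapsto \hat\sigma$ is a group homomorphism from the stabiliser of $A_0$ in $\Aut(A)$ into $\Aut(F)$, its image is a subgroup, so both $\tau$ and $\tau^{-1}$ are realised; this observation is what lets me adjust the lift freely below.

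Next I would set up and verify the map. Recall from \eqref{eq:elements} together with \eqref{eq:alpha}--\eqref{eq:beta} that $(a,b) \in A \times_{(F,\id)} B$ iff $i_A(aA_0) = i_B(bB_0)$, whereas $(a,b) \in A \times_{(F,\tau)} B$ iff $\tau\bigl(i_A(aA_0)\bigr) = i_B(bB_0)$. I take $\Phi := \sigma^{-1} \times \id_B \in \Aut(A\times B)$, noting $\widehat{\sigma^{-1}} = \tau^{-1}$. Then for $(a,b) \in A\times_{(F,\id)} B$ one computes
\begin{equation}
  \tau\bigl(i_A(\sigma^{-1}(a)A_0)\bigr) = \tau\bigl(\tau^{-1}(i_A(aA_0))\bigr) = i_A(aA_0) = i_B(bB_0),
\end{equation}
so $\Phi(a,b) = (\sigma^{-1}(a),b) \in A\times_{(F,\tau)}B$. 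Thus $\Phi$ restricts to an injective homomorphism $A\times_{(F,\id)}B \to A\times_{(F,\tau)}B$. Since the extension \eqref{eq:extension} gives $|A\times_{(F,\tau)}B| = |A_0|\,|B_0|\,|F|$ independently of the twist, both groups have the same finite order, whence this injection is a bijection and hence an isomorphism. The case where $\tau$ is induced from $\rho \in \Aut(B)$ with $\rho(B_0)=B_0$ is handled identically by $\Psi := \id_A \times \rho$; here the asymmetry between $\alpha$ and $\beta$ in \eqref{eq:alpha}--\eqref{eq:beta} means one wants $\hat\rho = \tau$ directly rather than $\tau^{-1}$.

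The one genuinely delicate point — which I would flag explicitly — is precisely this bookkeeping of which power of $\tau$ the chosen lift must induce. Because $\alpha$ absorbs the twist while $\beta$ does not, a lift through $A$ must induce $\tau^{-1}$ and a lift through $B$ must induce $\tau$; this is harmless only because, as noted above, the induced-automorphism map has a subgroup for its image, so $\tau$ is liftable exactly when $\tau^{-1}$ is. I expect no other obstacle: everything reduces to the fact that an automorphism of a single factor preserving the relevant normal subgroup is an automorphism of the direct product that intertwines the two graph conditions.
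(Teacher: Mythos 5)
Your proof is correct and takes essentially the same route as the paper's: both realise the isomorphism as the restriction to the fibred products of an automorphism of $A\times B$ of the form $(\text{lift})\times\id$ acting on one factor only. Your explicit bookkeeping of whether the lift must induce $\tau$ or $\tau^{-1}$ --- forced by the asymmetric placement of the twist in $\alpha$ versus $\beta$ in \eqref{eq:alpha}--\eqref{eq:beta} --- is a point the paper's proof glosses over, and your observation that the induced-automorphism map has a subgroup for its image correctly disposes of it.
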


\begin{proof}
  Let us assume without loss of generality that $\tau$ is induced from $\that \in \Aut(A)$.  (The case where it is induced from an automorphism of $B$ is treated similarly.)  Let $\alpha:A \to F$ and $\beta: B \to F$ be the homomorphisms defined in equations \eqref{eq:alpha} and \eqref{eq:beta}, respectively.  Then this means that $\alpha(\that a) = \tau \alpha(a)$ for all $a \in A$.  We define an isomorphism $\varphi: A \times_{(F,\id)} B \to A \times_{(F,\tau)} B$ as follows.  Let $(a,b) \in A\times_{(F,\id)} B$ and let $\varphi(a,b) = (\that a, b)$.  Then since $\alpha(a) = \beta(b)$, we see that $\alpha(\that a) = \tau \alpha(a) = \tau \beta(b)$, whence indeed $(\that a , b) \in A \times_{(F,\tau)} B$.  Since $\that$ is an automorphism, $\varphi$ is a group isomorphism.
\end{proof}

A special case we will have ample opportunity to use is that of cyclic groups, for which automorphisms always lift.

\begin{lemma}
  \label{lem:cycliclift}
  Let $\ZZ_{mn} \to \ZZ_n$ be a group homomorphism.   Then if $r \in \ZZ_n^\times$, there exists some $s \in \ZZ_{mn}^\times$ with $s \equiv r \mod n$.
\end{lemma}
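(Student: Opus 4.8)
The plan is to read the homomorphism $\ZZ_{mn}\to\ZZ_n$ as the canonical reduction modulo $n$ --- the one for which the congruence $s\equiv r\pmod n$ is meaningful --- so that the statement becomes the assertion that the induced map $\ZZ_{mn}^\times\to\ZZ_n^\times$ on unit groups is surjective. Concretely, fixing an integer representative $r$ with $\gcd(r,n)=1$, I must produce an integer $s$ with $\gcd(s,mn)=1$ and $s\equiv r\pmod n$; reducing such an $s$ modulo $mn$ then yields the required unit in $\ZZ_{mn}^\times$.

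First I would reduce the coprimality requirement to a condition involving only the primes dividing $m$. Run through the prime divisors $q$ of $mn$ and split them into two classes. If $q\mid n$, then any $s\equiv r\pmod n$ satisfies $s\equiv r\pmod q$; since $\gcd(r,n)=1$ forces $q\nmid r$, such an $s$ is automatically coprime to $q$, so no extra condition is imposed at primes dividing $n$. Hence the only primes that could spoil coprimality are those $q$ with $q\mid m$ and $q\nmid n$, and at each of these I need merely arrange $s\not\equiv 0\pmod q$.

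Next I would invoke the Chinese Remainder Theorem to meet all constraints simultaneously. The modulus $n$ together with the finitely many primes $q\mid m$ satisfying $q\nmid n$ are pairwise coprime, so there exists an integer $s$ solving $s\equiv r\pmod n$ and $s\equiv 1\pmod q$ for each such $q$. By construction $s$ is coprime to $n$ and to every prime dividing $m$, whence $\gcd(s,mn)=1$, while $s\equiv r\pmod n$ as desired. Reducing modulo $mn$ gives the sought element of $\ZZ_{mn}^\times$. (Alternatively, one can exhibit an explicit lift $s=r+n\prod_{p\mid m,\,p\nmid r}p$ and verify the same claim prime by prime.)

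There is no genuine obstacle: the result is elementary, and the only point needing care is that coprimality must be secured \emph{simultaneously} at all the offending primes, which is precisely what the Chinese Remainder Theorem delivers. For the application it is worth recording that the lemma is exactly the surjectivity of reduction on unit groups, which is what Lemma~\ref{lem:autolift} requires in order to lift an arbitrary $r\in\ZZ_n^\times$ to an automorphism of $\ZZ_{mn}$.
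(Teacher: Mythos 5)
Your proof is correct, and it takes a genuinely different (and arguably cleaner) route from the paper's. The paper's proof, credited to Keith Conrad, first treats the special case $\gcd(m,n)=1$ by applying the Chinese Remainder Theorem to the pair of moduli $m$ and $n$, solving $s\equiv r\pmod{n}$ and $s\equiv 1\pmod{m}$; it then reduces the general case to this one by replacing $n$ with an auxiliary modulus $n'=\ell^2p$ built from $\ell=\gcd(m,n)$ and $n=\ell p$, so that $m'n'=mn$ with $m'=m/\ell$ coprime to $n'$. You instead argue prime by prime: coprimality of $s$ to every prime dividing $n$ is automatic from $s\equiv r\pmod{n}$ and $\gcd(r,n)=1$, so the only conditions left to impose are $s\not\equiv 0\pmod{q}$ for the primes $q\mid m$ with $q\nmid n$, and since these $q$ together with $n$ are pairwise coprime a single application of CRT finishes. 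What your version buys is the elimination of the reduction step entirely; this is not merely cosmetic, since the paper's reduction asserts that $m/\ell$, $\ell$ and $n/\ell$ are pairwise coprime, which fails in general (e.g.\ $m=4$, $n=2$ gives $m'=\ell=2$), so the published argument needs repair at exactly the point your argument sidesteps. Your explicit lift $s=r+n\prod_{p\mid m,\,p\nmid r}p$ is also a valid alternative, checked prime by prime in the same way.
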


\begin{proof}
  (We learnt this proof from Keith Conrad at MathOverflow \cite{MO32878}.)  Let us first consider the special case where $m$ and $n$ are coprime.  Then by the Chinese Remainder Theorem, there is a unique $s \in\ZZ_{mn}$ solving the congruences $s \equiv r \mod n$ and $s \equiv 1 \mod m$.  The first congruence says that $s$ is coprime to $n$ since $r$ is, whereas the second congruence says that $s$ is coprime to $m$, hence $s \in \ZZ_{mn}^\times$.  In the general case, let $m$ and $n$ have greatest common divisor $\ell$, so that $m = \ell m'$ and $n = \ell p$ with $m'$, $\ell$ and $p$ pairwise coprime.  Let $n' = \ell^2 p$, so that $m'n'=mn$, but now $m'$ and $n'$ are coprime. Since $r$ is coprime to $n$, it is also coprime to $n'$ and hence, by the special case, there exists a unique $s \in \ZZ^\times_{mn}$ such that $s \equiv r \mod {n'}$.  But then $s \equiv r \mod n$ as well.
\end{proof}

We now discuss the different groups in some detail.

\begin{itemize}
\item $\Gamma = \sA_{kl-1} \times_{(\ZZ_l,\tau)} \sA_{ml-1}$\\
  Here $\tau$ is represented by some $r \in \ZZ_l^\times$; that is, an integer $r$ coprime to $l$.  Then by Lemma \ref{lem:cycliclift}, there exists an integer $s$ congruent to $r$ modulo $l$ such that $s$ is coprime to $kl$.  In other words, $s \in \ZZ_{kl}^\times$ defines an automorphism $\that$ of $\ZZ_{kl}$ which lifts $\tau$.  By Lemma~\ref{lem:autolift}, the isomorphism type of $\Gamma$ does not depend on $\tau$.  Let us therefore take $\tau = \id$.  Then $\Gamma$ is the extension
  \begin{equation}
    \begin{CD}
      1 @>>> \sA_{k-1} \times \sA_{m-1} @>>> \Gamma @>>> \ZZ_l @>>> 1~.
    \end{CD}
  \end{equation}
Let us consider the element $\gamma = (\omega_{kl},\omega_{ml}) \in \Gamma$, which is sent to the generator of $\ZZ_l$.  The group $\left<\gamma\right>$ generated by $\gamma$ has order $al$, where $a$ is the least common multiple of $k$ and $m$. In fact, it is clear that $al$ is an exponent of $\Gamma$, whence $\left<\gamma\right>$ is a summand of $\Gamma$.  Since $\Gamma$ is covered by $\ZZ^2$, it is isomorphic to the direct product of (at most) two cyclic groups, whence counting order $\Gamma \cong \ZZ_{al} \times \ZZ_b$, where $b$ is the greatest common divisor of $k$ and $m$.  (We learnt this proof from Robin Chapman over at MathOverflow \cite{MO30656}.)

\item $\Gamma=\sA_{3k-1} \times_{(\ZZ_3,\tau)} \sE_6$\\
A similar argument shows that this subgroup is isomorphic to $\sA_{3k-1} \times_{(\ZZ_3,\id)} \sE_6$, since the nontrivial outer automorphism of $\ZZ_3$ lifts to an automorphism of $\ZZ_{3k}$, by Lemma \ref{lem:cycliclift}.

\item $\Gamma = \sD_{k(2l+1)+2} \times_{(D_{4l+2},\tau)} \sD_{m(2l+1)+2}$\\
  The outer automorphism group of $D_{4l+2}$ is isomorphic to $\ZZ_{2l+1}^\times/\left<-1\right>$.  Let us concentrate on the surjection $\sD_{k(2l+1)+2} \to D_{4l+2}$, sending $x$ to $[x]$, where $[x]$ is the coset of $x$ relative to the normal subgroup generated by $t^{2l+1}$.   Then $r \in \ZZ_{2l+1}^\times$ represent an outer automorphism $\tau$ of $D_{4l+2}$.  Its action on $D_{4l+2}$ is given by
  \begin{equation}
    \tau[t^p] = [t^{rp}] \qquad\text{and}\qquad \tau [st^p] = [s t^{rp}]~.
  \end{equation}
  By Lemma~\ref{lem:cycliclift}, there exists $r' \in \ZZ_{2k(2l+1)}^\times$ with $r' \equiv r \mod{2l+1}$ and this in turn defines an automorphism $\that$ of $\sD_{k(2l+1)+2}$ defined by
  \begin{equation}
    \that(t^p) = t^{r'p} \qquad\text{and}\qquad \that(st^p) = st^{r'p}~.
  \end{equation}
  Notice, though, that $[t^{r'p}] = [t^{rp}]$ since $r' \equiv r \mod{2l+1}$ and $[t^{2l+1}]=1$.  Therefore $[\that x] =\tau[x]$ for all $x \in \sD_{k(2l+1)+2}$ and hence $\that$ is the lift of $\tau$.  Finally, by Lemma~\ref{lem:autolift}, the isomorphism type of $\Gamma$ is independent of $\tau$.  In those cases when the factor group is isomorphic to either $D_4$ or $2D_4$, not every automorphism of the factor group lifts, and hence Lemma~\ref{lem:autolift} is not applicable.  Such cases require a more detailed analysis which is beyond the scope of this paper.

\item $\Gamma=\sD_{2kl+2} \times_{(D_{4l},\tau)} \sD_{2ml+2}$ ($l\neq 1$) and $\Gamma=\sD_{l(2k+1)+2} \times_{(2D_{2l},\tau)} \sD_{l(2m+1)+2}$ ($l\neq 2$)\\
  These two cases are very similar.  In both cases, there are two kinds of nontrivial outer automorphisms.  The ones in $\ZZ_{2l}^\times$ lifts just as in the previous example and we will not discuss this further.  Let $\tau$ denote the automorphism of $D_{4l}$ corresponding to the nontrivial element in the $\ZZ_2$-factor of $\Out(D_{4l})$.  It is defined by
  \begin{equation}
    \tau[t^p] = [t^p] \qquad\text{and}\qquad \tau[st^p] = [st^{p+1}]~.
  \end{equation}
  Let $\that$ be the automorphism of $\sD_{2kl+2}$ defined by
  \begin{equation}
    \that(t^p) = t^p \qquad\text{and}\qquad \that(st^p) = (st^{p+1})~.
  \end{equation}
  Then $\tau[x] = [\that x]$ and hence $\tau$ lifts to an automorphism of $\sD_{2kl+2}$.  The same argument, \emph{mutatis mutandis}, shows that for $\sD_{l(2k+1)+2} \times_{(2D_{2l},\tau)} \sD_{l(2m+1)+2}$, the outer automorphisms of $2D_{2l}$ lift to automorphisms of $\sD_{l(2k+1)+2}$.  In both cases, Lemma~\ref{lem:autolift} implies that the fibred products are (up to isomorphism) independent of the twisting automorphism.

\item $\Gamma =\sE_6 \times_{(T,\tau)} \sE_6$\\
  Here $\tau$ is the unique nontrivial outer automorphism of the tetrahedral group $T$, which is induced from the unique nontrivial outer automorphism $\that$ of $\sE_6$.  We let $x \mapsto [x]$ denote the map $\sE_6 \to T$ and let $Z$ denote the kernel of this map, which is the centre of $\sE_6$.  In terms of quaternions it is the subgroup $\{\pm 1\}$.  We claim that the automorphism $\tau [x] := [\that x]$ is not inner.  Indeed, suppose that $\tau[x] = [z][x][z]^{-1}$ for some $[z]\in T$.  Then $[\that x] = [z x z^{-1}]$, whence
  \begin{equation}
    \that x = \varepsilon(x) z x z^{-1}~,
  \end{equation}
  where $\varepsilon: \sE_6 \to Z$ is a group homomorphism (from the fact that $\that$ is an automorphism).  The kernel of this homomorphism is either all of $\sE_6$ or else a normal subgroup of index 2.  However as seen in Section \ref{sec:binary-tetr-group}, $\sE_6$ has no such normal subgroups.  This means that $\varepsilon(x) = 1$ for all $x$ and hence that $\that = z x z^{-1}$, contradicting the fact that $\that$ is not inner.  This means that $\tau \in \Out(T)$ lifts and, by Lemma \ref{lem:autolift}, we  conclude that $\Gamma \cong \sE_6 \times_{(T,\id)} \sE_6$.  We can determine the structure of this group as follows.  It consists of the following elements of $\sE_6 \times \sE_6$:
  \begin{equation}
    \Gamma = \left\{(a,a)\middle | a \in \sE_6 \right\} \cup \left\{(a,-a)\middle | a\in\sE_6 \right\}~.
  \end{equation}
  The diagonal subgroup $\left\{(a,a)\middle | a \in \sE_6 \right\}$ has index 2 and is hence normal.  This means we have an exact sequence
  \begin{equation}
    \begin{CD}
      1 @>>> \sE_6 @>>> \Gamma @>>> \ZZ_2 @>>> 1~,
    \end{CD}
  \end{equation}
  which is easily seen to split, with the homomorphism $\ZZ_2 \to \Gamma$ given by sending the generator $-1$ to $(1,-1)$.  Since $(1,-1)$ is central, we see that $\Gamma \cong \sE_6 \times \ZZ_2$.

\item $\Gamma=\sE_7 \times_{(O,\id)} \sE_7$\\
  Here there are no nontrivial outer automorphisms and the same argument, \emph{mutatis mutandis}, as in the previous case shows that $\sE_7 \times_O \sE_7 \cong \sE_7\times \ZZ_2$.

\item $\Gamma=\sE_8 \times_{(I,\tau)} \sE_8$\\
  The same argument, \emph{mutatis mutandis}, as in the case $\sE_6 \times_{(T,\tau)} \sE_6$ shows that for $\tau$ the nontrivial element of $\Out(I)$, $\sE_8 \times_{(I,\tau)} \sE_8 \cong \sE_8 \times_{(I,\id)} \sE_8 \cong \sE_8 \times \ZZ_2$.
\end{itemize}

The last three cases imply that the corresponding orbifolds $S^7/\Gamma$ are $\ZZ_2$-orbifolds of the smooth $\eN=4$ and $\eN=5$ quotients associated to the ADE subgroups $\sE_{6,7,8}$ and classified in \cite{deMedeiros:2009pp}.

\section{Finite subgroups of $\SO(4)$}
\label{sec:finite-subgroups-so4}

In this appendix will summarise an independent check of our classification of finite subgroups of $\Spin(4)$ (up to conjugation) by showing that we recover the classification of finite subgroups of $\SO(4)$ in \cite[§4]{MR1957212}, particularly their Tables~4.1 and 4.2.

\subsection{Notation}
\label{sec:notation}

The notation in \cite{MR1957212} deserves some comment.  First of all, they call subgroups of $\SO(4)$ \emph{chiral}, to distinguish them from the \emph{achiral} subgroups of the general orthogonal group. (This is not to be confused with the notion of chiral subgroup introduced in Section \ref{sec:orbif-prod-groups}.)  Chiral subgroups are further divided into \emph{diploid} and \emph{haploid} subgroups, according to whether or not the subgroup contains the orthogonal transformation $-\id \in \SO(4)$, sending $x$ to $-x$, which is chiral in four dimensions.

The notation for the 2-to-1 covering homomorphism $\Sp(1) \times \Sp(1) \to \SO(4)$ is such that $(l,r) \mapsto [l,r]$, for $l,r\in\Sp(1)$.  The kernel of this homomorphism is the order-2 subgroup generated by $(-1,-1)$, whence $[-l,-r]=[l,r]$.  In this notation, the orthogonal transformation $-\id$ is denoted $[1,-1]$ or equivalently $[-1,1]$.

Haploid subgroups of $\SO(4)$ are denoted $+\frac1{f} [L \times R]$, where $L,R < \SO(3)$ and $f$ is the order of the relevant factor group (as in Goursat's Lemma).  Up to at most a dichotomy, the order determines the factor group, whence the notation is usually not ambiguous.  In case of ambiguity, either $L$ or $R$ are further adorned with a bar.  Diploid subgroups of $\SO(4)$ are denoted $\pm \frac1{f}[L\times R]$ with similar meanings to the symbols.  The $\pm$ is appropriate because if $g$ belongs to a diploid subgroup, so does $-g$ (with $\pm g$ thought of as $4\times 4$ matrices).  The order of a haploid subgroup $+\frac1{f} [L \times R]$ is given by $|L||R|/f$, whereas that of a diploid subgroup $\pm \frac1{f}[L\times R]$ is twice that: $2|L||R|/f$.

The finite subgroups of $\SO(4)$ classified in \cite{MR1957212} are tabulated in Tables 4.1 and 4.2 in that paper.  Next we will recover this classification from our classification of finite subgroups of $\Spin(4)$ and will exhibit the precise correspondence between subgroups.  We believe this provides an independent check, both of our results and, if necessary, of those in \cite{MR1957212}.

\subsection{Recovering the classification}
\label{sec:recov-class}

As the notation described above makes clear, the classification of finite subgroups of $\SO(4)$ is based on the classification of finite subgroups of $\PSO(4) \cong \SO(3)\times\SO(3)$.  The groups $\Spin(4)$, $\SO(4)$ and $\PSO(4)$ are related as follows.  The centre of $\Spin(4)$ is isomorphic to $\ZZ_2 \times \ZZ_2$.  Under the isomorphism $\Spin(4) \cong \Sp(1)\times \Sp(1)$, the centre is the subgroup $Z$ consisting of the four elements: $(1,1)$, $(1,-1)$, $(-1,1)$ and $(-1,-1)$.  The kernel of the homomorphism $\Sp(1)\times\Sp(1) \to \SO(4)$ is the subgroup generated by $(-1,-1)$.  The two elements $(1,-1)$ and $(-1,1)$ map to the same element $[-1,1]$ of $\SO(4)$: namely, $-\id$.  This element generates the kernel of the homomorphism $\SO(4) \to \SO(3)\times\SO(3)$.  Let us introduce the notation $\pi: \Sp(1) \to \SO(3)$ for the covering homomorphism.  The kernel of $\pi$ is the order-2 subgroup generated by $-1$.  Restricted to the ADE subgroups of $\Sp(1)$, we have the following correspondence:
\begin{equation}
  \begin{array}{c|cccccc}
    G & \sA_{2n-1} & \sA_{2n} & \sD_{k+2} & \sE_6 & \sE_7 & \sE_8 \\\hline
    \pi(G) & C_n & C_{2n+1} & D_{2k} & T & O & I
  \end{array}
\end{equation}
where $C_n$ denotes the cyclic group of order $n$, and where $\pi$ is a double cover in all cases but $\sA_{2n} \to C_{2n+1}$, where it is an isomorphism.

Now let $\Gamma < \Sp(1)\times \Sp(1)$, whence $\Gamma = A \times_{(F,\tau)}B$, where $A,B$ are finite subgroups of $\Sp(1)$ with common factor $F$ and $\tau \in \Aut(F)$.  Equivalently, $\Gamma$ is a categorical pull-back (with all maps epimorphisms)
\begin{equation}
  \label{eq:pbdiag}
  \xymatrix{\Gamma \ar@{->}[r]^\rho \ar@{->}[d]_\lambda & B \ar@{->}[d]^\beta\\
  A \ar@{->}[r]^\alpha & F}
\end{equation}
where $\alpha,\beta$ incorporate the automorphism $\tau \in \Aut(F)$.  Let $\Gbar$ denote the projection of $\Gamma$ to $\SO(3) \times \SO(3)$.  Let
$\Abar$ and $\Bbar$ be the images of $A$ and $B$, respectively, under $\pi$.  Since $\Gamma < A \times B$, it follows that $\Gbar < \Abar \times \Bbar$.  This gives maps $\lambdabar : \Gbar \to \Abar$ and $\rhobar : \Gbar \to \Bbar$ making the following diagram commute:
\begin{equation}
  \label{eq:pbdiagbar}
  \xymatrix{ & \Gamma \ar@{->}[dl]_\lambda \ar@{->}[dd] \ar@{->}[dr]^\rho & \\
    A \ar@{->}[dd]_\pi & & B \ar@{->}[dd]^\pi \\
    & \Gbar \ar@{->}[dl]_\lambdabar \ar@{->}[dr]^\rhobar & \\
    \Abar & & \Bbar }
\end{equation}
It follows categorically that since $\pi,\lambda,\rho$ are epimorphisms, so are $\lambdabar$ and $\rhobar$.  By Goursat's Lemma, $\Gbar$ is also then given by a categorical pull-back (with all maps epimorphisms)
\begin{equation}
  \xymatrix{\Gbar \ar@{->}[r]^\rhobar \ar@{->}[d]_\lambdabar & \Bbar \ar@{->}[d]^\betabar\\
  \Abar \ar@{->}[r]^\alphabar & \Fbar}
\end{equation}
for some morphisms $\alphabar,\betabar$ to a common factor $\Fbar$.  Once given $\Abar$ and $\Bbar$, we determine $\alphabar:\Abar \to \Fbar$ and $\betabar: \Bbar \to \Fbar$ from the knowledge of $\Gbar$ as in the proof of Goursat's Lemma.  One thing we can say in general is that there is an epimorphism $\phi: F \to \Fbar$ in such a way that the following cube commutes:
\begin{equation}
  \xymatrix@!0{ & \overline\Gamma \ar@{->}[rr]\ar@{->}'[d][dd] & & \overline B \ar@{->}[dd]\\
    \Gamma \ar@{->}[ur]\ar@{->}[rr]\ar@{->}[dd] & & B \ar@{->}[ur]\ar@{->}[dd] \\
    & \overline A \ar@{->}'[r][rr] & & \overline F \\
    A \ar@{->}[rr]\ar@{->}[ur] & & F \ar@{.>}[ur]_\phi}
\end{equation}
where the front and back faces are the pull-back diagrams \eqref{eq:pbdiag} and \eqref{eq:pbdiagbar} and the three solid arrows between them are all $\pi$.   Indeed, let $f \in F$.  Then there is some $(l,r) \in \Gamma$ with $\alpha(l) = \beta(r) = f$.  We define $\phi(f) \in \Fbar$ by $\phi(f) = \alphabar(\overline l) = \betabar(\overline r)$.  One readily checks that $\phi$ is well-defined and again an epimorphism because so are $\pi,\alpha,\alphabar$ or $\pi,\beta,\betabar$.

We are actually interested in $[\Gamma]$, which is the image of $\Gamma$ under the covering homomorphism $\Sp(1) \times \Sp(1) \to \SO(4)$.  To understand the relationship between $[\Gamma]$ and $\Gbar$ we need to understand how $\Gamma$ interacts with the centre $Z$ of $\Sp(1) \times \Sp(1)$.  The lattice of subgroups of $Z$ is given by
\begin{equation}
  \xymatrix{& Z & \\
    \left<(1,-1)\right> \ar@{-}[ur] & \left<(-1,1)\right> \ar@{-}[u] & \left<(-1,-1)\right> \ar@{-}[ul] \\
    & \{(1,1)\} \ar@{-}[ul] \ar@{-}[u] \ar@{-}[ur]
  }
\end{equation}
whence there are four different possibilities for $\Gamma \cap Z$:
\begin{enumerate}\renewcommand{\labelenumi}{(\alph{enumi})}
\item $\Gamma \cap Z = Z$: in this case $\Gamma \not\cong [\Gamma] \not\cong \Gbar$, whence $[\Gamma]$ is diploid;
\item $\Gamma \cap Z = \left<(-1,1)\right>$ or $\left<(1,-1)\right>$: in this case $\Gamma \cong [\Gamma] \not\cong \Gbar$, whence $[\Gamma]$ is again diploid;
\item $\Gamma \cap Z = \left<(-1,-1)\right>$: in this case $\Gamma \not\cong [\Gamma] \cong \Gbar$, whence $[\Gamma]$ is haploid; and
\item $\Gamma \cap Z = \{(1,1)\}$: in this case $\Gamma \cong [\Gamma] \cong \Gbar$, whence $[\Gamma]$ is again haploid.
\end{enumerate}
In the first two cases we have that $[\Gamma] = \pm \frac{1}{f}[\Abar \times \Bbar]$, where $f = |\Fbar|$; whereas in the last two cases, $[\Gamma] = + \frac{1}{f}[\Abar \times \Bbar]$.

It is now a simple matter of going in turn through every single finite subgroup of $\Sp(1) \times \Sp(1)$ in Tables \ref{tab:products}, \ref{tab:smooth} and \ref{tab:remaining}, determining which case (a)-(d) obtains and the nature of the group $\Fbar$.  Doing so we recover Tables 4.1 and 4.2 in \cite{MR1957212} with one small correction: namely, the penultimate entry in Table~4.1, corresponding to the haploid subgroup $+\half [D_{2m} \times C_{2n}] $ is missing the condition that both $m$ and $n$ be odd, which clearly follows from their choice of generators.  The precise correspondence between the subgroups $\Gamma < \Sp(1)\times \Sp(1)$ and $[\Gamma]<\SO(4)$ is given in Tables~\ref{tab:prodSO4}, \ref{tab:smoothSO4} and \ref{tab:remainSO4} below.  The fact that the smooth subgroups give rise to haploid subgroups is easy to explain: a smooth subgroup $\Gamma$ of $\Sp(1) \times \Sp(1)$ is the graph of an automorphisms and automorphisms preserve the centre, whence $(-1,-1) \in \Gamma$, but $(\pm 1,\mp 1) \not\in\Gamma$.

\begin{table}[h!]
  \caption{Subgroups of $\SO(4)$ coming from subgroups in Table~\ref{tab:products} \label{tab:prodSO4}}
  \centering
  \begin{tabular}[t]{>{$}l<{$}|>{$}l<{$}}
    \multicolumn{1}{c|}{$\Gamma$} & \multicolumn{1}{c}{$[\Gamma]$}\\\hline
    \sA_{2n-1} \times \sA_{2m-1} & \pm [C_n \times C_m]\\
    \sA_{2n-1} \times \sA_{2m} & \pm [C_n \times C_{2m+1}]\\
    \sA_{2n} \times \sA_{2m} & + [C_{2n+1} \times C_{2m+1}]\\
    \sA_{2n-1} \times \sD_{m+2} & \pm [C_n \times D_{2m}]\\
    \sA_{2n} \times \sD_{m+2} & \pm [C_{2n+1}\times D_{2m}]\\
    \sA_{2n-1} \times \sE_6 & \pm [C_n \times T]\\
    \sA_{2n} \times \sE_6 & \pm [C_{2n+1}\times T]\\
    \sA_{2n-1} \times \sE_7 & \pm [C_n \times O]\\
    \sA_{2n} \times \sE_7 & \pm [C_{2n+1}\times O]\\
    \sA_{2n-1} \times \sE_8 & \pm [C_n \times I]\\
    \sA_{2n} \times \sE_8 & \pm [C_{2n+1}\times I]
  \end{tabular}
  \qquad\qquad
  \begin{tabular}[t]{>{$}l<{$}|>{$}l<{$}}
    \multicolumn{1}{c|}{$\Gamma$} & \multicolumn{1}{c}{$[\Gamma]$}\\\hline
    \sD_{n+2} \times \sD_{m+2} & \pm [D_{2n} \times D_{2m}]\\
    \sD_{n+2} \times \sE_6 & \pm [D_{2n} \times T]\\
    \sD_{n+2} \times \sE_7 & \pm [D_{2n} \times O]\\
    \sD_{n+2} \times \sE_8 & \pm [D_{2n} \times I]\\
    \sE_6 \times \sE_6 & \pm [T \times T]\\    
    \sE_6 \times \sE_7 & \pm [T \times O]\\    
    \sE_6 \times \sE_8 & \pm [T \times I]\\    
    \sE_7 \times \sE_7 & \pm [O \times O]\\    
    \sE_7 \times \sE_8 & \pm [O \times I]\\    
    \sE_8 \times \sE_8 & \pm [I \times I]
  \end{tabular}
\end{table}

\begin{table}[h!]
  \caption{Subgroups of $\SO(4)$ coming from subgroups in Table~\ref{tab:smooth} \label{tab:smoothSO4}}
  \centering
  \begin{tabular}[t]{>{$}l<{$}|>{$}l<{$}}
    \multicolumn{1}{c|}{$\Gamma$} & \multicolumn{1}{c}{$[\Gamma]$}\\\hline
    \sA_{2n-1} \times_{(\ZZ_{2n},\tau)} \sA_{2n-1} & +\frac1n [C_n \times C_n^{(s)}],~(s,n)=1\\
    \sA_{2n} \times_{(\ZZ_{2n+1},\tau)} \sA_{2n} & +\frac1{2n+1} [C_{2n+1} \times C_{2n+1}^{(s)}],~(s,2n+1)=1\\
    \sD_{n+2} \times_{(2D_{2n},\tau)} \sD_{n+2} & +\frac1{2n} [D_{2n} \times D_{2n}^{(s)}],~(s,2n)=1\\
    \sE_6 \times_{2T} \sE_6 & +\frac1{12} [T\times T]\\
    \sE_7 \times_{(2O,\tau)} \sE_7 & +\frac1{24} [O\times O]~\text{and}~+\frac1{24} [O\times \overline O]\\
    \sE_8 \times_{(2I,\tau)} \sE_8 & +\frac1{60} [I\times I]~\text{and}~+\frac1{60} [I\times \overline I]
 \end{tabular}
\end{table}

\begin{table}[h!]
  \caption{Subgroups of $\SO(4)$ coming from subgroups in Table~\ref{tab:remaining} \label{tab:remainSO4}}
  \centering
  \begin{tabular}[t]{>{$}l<{$}|>{$}l<{$}}
    \multicolumn{1}{c|}{$\Gamma$} & \multicolumn{1}{c}{$[\Gamma]$}\\\hline
    \sA_{2kl-1} \times_{(\ZZ_l,\tau)} \sA_{2ml-1} & \pm \frac1l [C_{kl} \times C_{ml}^{(s)}],~(s,l)=1\\
    \sA_{(2k+1)l-1} \times_{(\ZZ_l,\tau)} \sA_{2ml-1} & \pm \frac1l [C_{(2k+1)l} \times C_{ml}^{(s)}],~(s,l)=1\\
    \sA_{(2k+1)l-1} \times_{(\ZZ_l,\tau)} \sA_{(2m+1)l-1} & \pm \frac1l [C_{l(2k+1)} \times C_{l(2m+1)}^{(s)}],~(s,l)=1,~l\equiv1(2)\\
    \sA_{2l(2k+1)-1} \times_{(\ZZ_{2l},\tau)} \sA_{2l(2m+1)-1} & \pm \frac1l [C_{l(2k+1)} \times C_{l(2m+1)}^{(s)}],~(s,2l)=1\\
    \sA_{4k-1} \times_{\ZZ_2} \sD'_{2m+2} & \pm \frac12 [C_{2k} \times \Dbar_{4m}]\\
    \sA_{4k+1} \times_{\ZZ_2} \sD'_{2m+2} & \pm [C_{2k+1} \times \Dbar_{4m}]\\
    \sA_{4k-1} \times_{\ZZ_2} \sD_{m+2} & \pm \frac12 [C_{2k} \times D_{2m}]\\
    \sA_{4k+1} \times_{\ZZ_2} \sD_{m+2} & \pm [C_{2k+1} \times D_{2m}]\\
    \sA_{4k-1} \times_{\ZZ_4} \sD_{2m+3} & \pm \frac12 [C_{2k} \times D_{4m+2}],~k\equiv0(2)\\
    \sA_{4k-1} \times_{\ZZ_4} \sD_{2m+3} & + \frac12 [C_{2k} \times D_{4m+2}],~k\equiv1(2)\\
    \sA_{6k-1} \times_{\ZZ_3} \sE_6 & \pm \frac13 [C_{3k} \times T]\\
    \sA_{6k+2} \times_{\ZZ_3} \sE_6 & \pm \frac13 [C_{3(2k+1)} \times T]\\
    \sA_{4k-1} \times_{\ZZ_2} \sE_7 & \pm \frac12 [C_{2k} \times O]\\
    \sA_{4k+1} \times_{\ZZ_2} \sE_7 & \pm \frac12 [C_{4k+2} \times O]\\
    \sD'_{2k+2} \times_{\ZZ_2} \sD'_{2m+2} & \pm \frac12 [\Dbar_{4k} \times \Dbar_{4m}]\\
    \sD_{k+2} \times_{\ZZ_2} \sD_{m+2} & \pm \frac12 [D_{2k} \times D_{2m}]\\
    \sD_{2k+2} \times_{(D_4,\tau)} \sD_{2m+2} & \pm\frac14[D_{4k} \times D_{4m}]~\text{and}~\pm\frac14[D_{4k} \times \Dbar_{4m}]\\
    \sD_{lk+2} \times_{(D_{2l},\tau)} \sD_{lm+2} & \pm \frac1{2l} [D_{2lk} \times D_{2lm}^{(s)}]~,(s,l)=1,~l>2\\
    \sD_{2k+3} \times_{\ZZ_4} \sD_{2m+3} & + \frac12 [D_{4k+2} \times D_{4m+2}]\\
    \sD_{4k+4} \times_{(2D_4,\tau)} \sD_{4m+4} & +\frac14 [D_{8k+4} \times D_{8m+4}]~\text{and}~+\frac14 [D_{8k+4} \times \Dbar_{8m+4}]\\
    \sD_{l(2k+1)+2} \times_{(2D_{2l},\tau)} \sD_{l(2m+1)+2} & + \frac1{2l} [D_{2l(2k+1)} \times D_{2l(2m+1)}^{(s)}]~,(s,2l)=1,~l>2\\
    \sD_{k+2} \times_{\ZZ_2} \sD'_{2m+2} & \pm \frac12 [D_{2k} \times \Dbar_{4m}]\\
    \sD'_{2k+2} \times_{\ZZ_2} \sE_7 & \pm \frac12 [\Dbar_{4k} \times O]\\
    \sD_{k+2} \times_{\ZZ_2} \sE_7 & \pm \frac12 [D_{2k}\times O]\\
    \sD_{3k+2} \times_{D_6} \sE_7 & \pm \frac16 [D_{6k} \times O]\\
    \sE_6 \times_{\ZZ_3} \sE_6 & \pm \frac13 [T \times T]\\
    \sE_6 \times_{T} \sE_6 & \pm \frac1{12} [T \times T]\\
    \sE_7 \times_{\ZZ_2} \sE_7 & \pm \frac12 [O \times O]\\
    \sE_7 \times_{D_6} \sE_7 & \pm \frac16 [O \times O]\\
    \sE_7 \times_{O} \sE_7 & \pm \frac1{24} [O \times O]\\
   \sE_8 \times_{(I,\tau)} \sE_8 & \pm \frac1{60} [I \times I]~\text{and}~\pm \frac1{60} [I \times \overline I]
\end{tabular}
\end{table}

\bibliographystyle{utphys}
\bibliography{Sugra,Geometry,Algebra,AdS3}

\end{document}